\theoremstyle{plain}
\newtheorem{theorem}{Theorem}
\newtheorem{lemma}[theorem]{Lemma}
\theoremstyle{definition}
\theoremstyle{remark}
\newtcolorbox{idea}[1][]
{
colbacktitle=cyan,
colback=cyan!10,
arc=1pt,
boxrule=1pt,
title=#1 % I would like to make this (one of these in general) assignment optional depending on #1, #2...
}
\newtcolorbox{update}[1][]
{
colbacktitle=gray,
colback=gray!10,
arc=1pt,
boxrule=1pt,
title=#1 % I would like to make this (one of these in general) assignment optional depending on #1, #2...
}
\newtcolorbox{question}[1][]
{
coltitle=black,
colbacktitle=yellow,
colback=yellow!10,
arc=1pt,
boxrule=1pt,
title=#1 % I would like to make this (one of these in general) assignment optional depending on #1, #2...
}
\newtcolorbox{note}[1][]
{
coltitle=black,
colbacktitle=green,
colback=green!10,
arc=1pt,
boxrule=1pt,
title=#1 % I would like to make this (one of these in general) assignment optional depending on #1, #2...
}
\newcommand{\degreeq}{\texttt{Deg}}
\newcommand{\randedgeq}{\texttt{RandEdge}}
\newcommand{\neighbourq}{\texttt{Nbr}}
\newcommand{\edgeexistsq}{\texttt{Pair}}
\newcommand{\graph}{G}
\newcommand{\vertexset}{V}
\newcommand{\edgeset}{E}
\newcommand{\vertexcount}{n}
\newcommand{\edgecount}{m}
\newcommand{\vertex}{v}
\newcommand{\altvertex}{u}
\newcommand{\neighbour}{{\sf N}}
\newcommand{\degree}[1]{\mathrm{deg}\fbrac{{#1}}}
\newcommand{\vertexsamplesize}{K}
\newcommand{\sampledvertices}{S}
\newcommand{\densebucketset}{I_\heavybucket}
\newcommand{\vertexdensity}{\rho}
\newcommand{\heavybucket}{\mathrm{H}}
\newcommand{\lightbucket}{\mathrm{L}}
\newcommand{\bucketset}{\mathrm{B}}
\newcommand{\heavydegapprox}{\widetilde{d}_\heavybucket}
\newcommand{\heavydeg}{\degree{\heavybucket}}
\newcommand{\randomvertices}{T}
\newcommand{\heavyprobapprox}{\widetilde{p}_\heavybucket}
\newcommand{\heavyprob}{p_\heavybucket}
\newcommand{\nonadapalgo}{\texttt{NonAdaptiveEdge}}
\newcommand{\fullnonadapalgo}{\texttt{FullNonAdaptiveEdge}}
\newcommand{\preapprox}{\gamma}
\newcommand{\collisioncount}{r}
\newcommand{\resamplesize}{s}
\newcommand{\collisionedge}{\widetilde{m}_\text{coll}}
\newcommand{\noncollisionedge}{\widetilde{m}_\text{non-coll}}
\newcommand{\iscollision}{\texttt{IsCollision}}
\newcommand{\collision}{k}
\newcommand{\preresamplesize}{t}
\newcommand{\bucket}{B}
\newcommand{\bucketcount}{t}
\DeclarePairedDelimiter{\ceil}{\lceil}{\rceil}
\newcommand{\size}[1]{\left|{#1}\right|}
\newcommand{\fbrac}[1]{\left({#1}\right)}
\newcommand{\sbrac}[1]{\left\{{#1}\right\}}
\newcommand{\tbrac}[1]{\left[{#1}\right]}
\newcommand{\abs}[1]{\left|{#1}\right|}
\DeclareMathOperator*{\E}{\mathbb{E}}
\DeclareMathOperator*{\Var}{\mathrm{Var}}
\newcommand{\approxerror}{\varepsilon}
\newcommand{\remove}[1]{}
\title{
Optimal non-adaptive algorithm for edge estimation
}
\author{
Arijit Bishnu\footnote{Indian Statistical Institute, Kolkata, India}
\and
Debarshi Chanda\footnotemark[1]
\and
Buddha Dev Das\footnotemark[1]
\and 
Arijit Ghosh\footnotemark[1]
\and
Gopinath Mishra\footnote{The Institute of Mathematical Sciences, HBNI, Chennai, India}
}
\date{}
\begin{document}
\maketitle

\begin{abstract}
    We present a simple nonadaptive randomized algorithm that estimates the number of edges in a simple, unweighted, undirected graph, possibly containing isolated vertices, using only degree and random edge queries. For an $n$-vertex graph, our method requires only $\widetilde{O}(\sqrt{n})$ queries, achieving sublinear query complexity. The algorithm independently samples a set of vertices and queries their degrees, and also independently samples a set of edges, using the answers to these queries to estimate the total number of edges in the graph. We further prove a matching lower bound, establishing the optimality of our algorithm and resolving the non-adaptive query complexity of this problem with respect to degree and random-edge queries.
\end{abstract}

\section{Introduction}
\label{sec:intro}

Let $\graph = (\vertexset, \edgeset)$ be a simple undirected graph with $\vertexcount = |\vertexset|$ vertices and $\edgecount = |\edgeset|$ edges, possibly containing isolated vertices. For a vertex $\vertex \in \vertexset$, let $\neighbour(\vertex) = \{\altvertex : (\altvertex, \vertex) \in \edgeset\}$ be its neighborhood and $\degree{\vertex} = |\neighbour(\vertex)|$ its degree. We use $[n]$ to denote $\{1, 2, \dots, n\}$.  

Given an approximation parameter $\approxerror > 0$, our goal is to design a \textbf{non-adaptive randomized algorithm} that outputs a $(1\pm\approxerror)$-approximation $\widetilde{\edgecount}$ of $m$ with success probability at least $3/4$.\footnote{We say that $\widetilde{\edgecount}$ is a \emph{$(1 \pm \approxerror)$-approximation} of $\edgecount$ if $|\widetilde{\edgecount} - \edgecount| \leq \approxerror\,\edgecount$.} The graph $\graph$ is accessed through the following two queries:
\begin{itemize}
    \item \emph{Degree query} $\degreeq(\vertex)$: Given a vertex $\vertex$, returns its degree $\degree{\vertex}$.
    
    \item \emph{Random edge query} $\randedgeq$: Returns an edge selected uniformly at random from $\edgeset$.
\end{itemize}

An algorithm is \emph{non-adaptive} if it determines all its queries in advance, based solely on its input parameters and its internal randomness, without depending on the outcomes of prior queries. Otherwise, it is \emph{adaptive}. Non-adaptive algorithms are desirable due to their simplicity and suitability for parallel or offline environments.

In this paper, we present the first (almost) optimal non-adaptive randomized algorithm that estimates $\edgecount$ within a $(1 \pm \approxerror)$ factor using $\widetilde{O}(\sqrt{\vertexcount})$ degree and random-edge queries, where $\widetilde{O}(\cdot)$ suppresses factors polynomial in $\log \vertexcount$ and $1/\approxerror$. Our main contributions are summarized below:
\begin{itemize}
    \item We give the first (almost) optimal {\em non-adaptive algorithm} for this problem.
    
    \item Our algorithm applies even to graphs with {\em isolated vertices}.
    
    \item The algorithm and its analysis are considerably simpler than existing adaptive approaches.
\end{itemize}

\subsection{Motivation}
\label{subsec: Motivation}

We now place our work in the context of prior research on sublinear-time graph parameter estimation~\cite{feige04, avg_deg_danaron08, BerettaT24, beretta2025}. In addition to degree and random edge queries, several other types of queries are used in the property testing literature to access graph structure:

\begin{itemize}    
    \item \emph{Neighbor query} $\neighbourq(\vertex, i)$: Given a vertex $\vertex \in \vertexset$ and index $i \in [n]$, returns the $i$-th neighbor of $\vertex$ if it exists, or $\perp$ otherwise.
    \item \emph{Edge existence query} $\edgeexistsq(\altvertex, \vertex)$: Given two vertices $\altvertex, \vertex \in \vertexset$, returns $1$ if $(\altvertex, \vertex) \in \edgeset$, and $0$ otherwise.
\end{itemize}

The problem of edge estimation, or equivalently, estimation of the average degree under restricted
graph access, was initiated by Feige~\cite{feige04}, who gave a $(2 \pm \approxerror)$-approximation
using $\widetilde{O}(\sqrt{n})$ non-adaptive degree queries and showed that improving this
approximation factor requires $\Omega(n)$ degree queries. This barrier was broken by Goldreich
and Ron~\cite{avg_deg_danaron08}, who used neighbor queries and achieved a $(1 \pm \approxerror)$-approximation
using $\widetilde{O}(\sqrt{n})$ \emph{adaptive} queries. We note that the $\widetilde{O}(\sqrt{n})$ query
complexity of both works holds only when the graph has $\Omega(n)$ edges; otherwise their worst-case query complexity can be as large as $\Omega(n)$.

Their work initiated a line of research on sublinear time algorithms for edge estimation~\cite{eden_lower_bound18, EdenRS19, addanki_et_al:LIPIcs.ESA.2022.2, beretta2025}. Despite their optimal query complexity, these algorithms are adaptive and tend to be more intricate, relying on interleaved degree and neighbor queries across randomly sampled vertices. This motivates the following natural question:

\begin{center}
   \emph{Can we design a non-adaptive randomized algorithm that outputs a $(1 \pm \approxerror)$-approximation\\
   of $\edgecount$ using only $\widetilde{O}(\sqrt{n})$ queries, with a matching lower bound?}
\end{center}

The works of Feige~\cite{feige04} and Goldreich and Ron~\cite{avg_deg_danaron08} suggest that improving edge-estimation guarantees is possible only by enriching the query model. A promising direction is to supplement degree queries with random-edge queries, which have recently found powerful applications in subgraph counting and edge estimation~\cite{Aliakbarpour/Algorithmica/2018/CountingStarSUbgraphsEdgeSampling,AssadiKapralovKhanna/ITCS/2019/SimpleSUblinearSubgraph,beretta2025}.
We study the non-adaptive edge-estimation problem in this strengthened model and prove the following result:

\begin{restatable}{theorem}{upperbound}
\label{theorem:non_adaptive_when_m>cn}
There is a randomized algorithm \nonadapalgo{} that, given $\degreeq$ and $\randedgeq$ query access to a graph $G=(V,E)$ with $|V|=n$, $|E|=m$, and $m \ge n/2$, outputs an estimate $\widetilde{m}$ satisfying
\[
(1-\varepsilon)m \le \widetilde{m} \le (1+\varepsilon)m
\]
using $\widetilde{O}\!\bigl(\,n^{1/2}\varepsilon^{-2.5}\,\bigr)$ queries.
\end{restatable}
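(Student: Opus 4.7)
My plan is to carry out a heavy--light decomposition supported by two independent uniform samples. Set the degree threshold $\tau:=\sqrt{n\varepsilon}$, call $v$ \emph{heavy} when $\deg(v)\ge\tau$, and write $m_L:=\sum_{\deg(v)<\tau}\deg(v)$ and $m_H:=2m-m_L$. The algorithm \nonadapalgo\ independently samples a uniformly random vertex set $S$ of size $K=\widetilde O(\sqrt n\,\varepsilon^{-2.5})$ via $\degreeq$ on each, together with a uniformly random edge set $A$ of size $T=\widetilde O(\varepsilon^{-3})$ via \randedgeq; the two endpoint degrees of each $e\in A$ are also queried (as part of the same pre-specified, input-independent batch), which keeps the total budget at $\widetilde O(\sqrt n\,\varepsilon^{-2.5})$. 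From these samples I would form the unbiased estimators
\[
\widehat m_L:=\frac{n}{K}\sum_{v\in S}\deg(v)\,\mathbf 1[\deg(v)<\tau], \qquad
\widehat\rho_H:=\frac{1}{T}\sum_{e\in A}\bigl|\{v\in e:\deg(v)\ge\tau\}\bigr|,
\]
which target $m_L$ and $\rho_H:=m_H/m$ respectively, and, motivated by the identity $m\equiv m_L/(2-\rho_H)$, output $\widehat m:=\widehat m_L/(2-\widehat\rho_H)$.

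Next I would carry out the variance bounds $\Var[\widehat m_L]\le n\tau m_L/K$ (each summand is at most $\tau$) and $\Var[\widehat\rho_H]\le 4/T$ (since each term lies in $\{0,1,2\}$). Applying Chebyshev's inequality together with a union bound then gives, with probability at least $3/4$, deviations $\Delta_L:=\widehat m_L-m_L$ and $\Delta_\rho:=\widehat\rho_H-\rho_H$ small enough that feeding them into the linearisation
\[
\widehat m-m\;=\;\frac{\Delta_L+m\,\Delta_\rho}{2-\widehat\rho_H}
\]
and using $|2-\widehat\rho_H|\ge\alpha/2$ (valid when $|\Delta_\rho|\le\alpha/2$, with $\alpha:=m_L/m$) yields $|\widehat m-m|\le\varepsilon m$. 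The specific choices $\tau=\sqrt{n\varepsilon}$ and $K,T$ as above are exactly what absorb the $n$- and $\varepsilon$-dependence into the advertised $\widetilde O(\sqrt n\,\varepsilon^{-2.5})$ budget, in the regime $\alpha\gtrsim\sqrt\varepsilon$.

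The hard part will be covering the full range $m\ge n/2$ with a single non-adaptive output, since the ratio formula above becomes unstable whenever $\alpha\to 0$. I would address this by computing, from the same sample pair $(S,A)$, the shifted ratio estimator $\widehat\mu/(2-\widehat\psi)$ with
\[
\widehat\mu:=\frac{n}{K}\sum_{v\in S}\min(\deg(v),\tau), \qquad
\widehat\psi:=\frac{1}{T}\sum_{e\in A}\sum_{v\in e}\frac{(\deg(v)-\tau)_+}{\deg(v)},
\]
whose denominator targets the more robust quantity $\mu/m$ for $\mu:=m_L+\tau|H|$, alongside the Feige-type back-up $(n/(2K))\sum_{v\in S}\deg(v)$, which is $(1\pm\varepsilon)$-accurate via $\Var\le n^2 m/(2K)$ in the ultra-dense regime. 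Dispatching between these three outputs is a pure post-processing decision on the already-collected samples, so non-adaptivity is preserved. The combinatorial crux, and the main technical obstacle I expect, is a structural lemma that uses only $m\ge n/2$ and the implied $n_{\mathrm{active}}\ge\sqrt{2m}\ge\sqrt n$ to lower-bound $\mu$ \emph{uniformly in the graph}---via a case split on whether the heavy set ($|H|\gtrsim\sqrt n$) or the active light set ($|L_{\mathrm{active}}|\gtrsim\sqrt n$) carries the non-isolated vertices; certifying that at least one of the three estimators remains in its safe regime on every input graph is what closes the argument.
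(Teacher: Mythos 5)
Your plan runs into a fatal adaptivity problem before anything else: your estimators $\widehat\rho_H$ and $\widehat\psi$ both require the degrees of the two endpoints of each sampled random edge, and those endpoints are determined by the \emph{answer} to a $\randedgeq$ query. A $\degreeq$ query whose argument is ``whichever vertex $\randedgeq$ returned'' is adaptive by definition, no matter how you group it into a batch --- the query targets depend on prior query outcomes, hence on the input graph. The paper avoids exactly this trap with an intersection trick: it keeps only those random-edge endpoints that also happen to land in the independently drawn uniform sample $S$ (whose degrees were already queried), and defines $\randomvertices_i = \randomvertices \cap \sampledvertices_i$. That way the degree information used to classify random-edge endpoints was obtained from pre-specified queries, and the loss of information from restricting to $\randomvertices \cap S$ is what forces $|S|$ and $|T|$ to be about $\sqrt n$ each (their product must be $\gtrsim n$ so that enough collisions occur). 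If you replace your direct endpoint-degree queries with this intersection, your $T$ can no longer be $\widetilde O(\varepsilon^{-3})$; you would need $|S|\cdot|T|\gtrsim n$, which is where the $\sqrt n$ budget on \emph{both} samples actually comes from.

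Beyond that, your decomposition is genuinely different, and the difference is not cosmetic. You cut by a single degree threshold $\tau=\sqrt{n\varepsilon}$, which produces a ratio estimator $\widehat m_L/(2-\widehat\rho_H)$ that degenerates whenever $m_L/m$ is small and then needs a three-way dispatch, a structural covering lemma, and a nonadaptive dispatch rule --- all of which you flag as the main obstacle and leave unproven. The paper instead buckets by degree into $t=O(\log n/\varepsilon)$ multiplicative bands and declares a bucket heavy when its sampled mass exceeds $\approx(1/t)\sqrt{\varepsilon/n}$. This choice forces $|\lightbucket|\le\sqrt{\varepsilon n/2}$, hence $|\edgeset(\lightbucket)|=O(\varepsilon n)=O(\varepsilon m)$, hence $\heavyprob\ge 1/2-O(\varepsilon)$ \emph{unconditionally} under $m\ge n/2$ (Lemma~\ref{lemma: probability of edges}). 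That single lemma replaces your entire case analysis: the denominator of the ratio estimator $\heavydegapprox/(2\heavyprobapprox)$ is bounded away from zero on every input, so there is no dispatch, no Feige fallback, and no uniform lower bound on $\mu$ to establish. If you want to salvage your plan, the two changes you need are (i) switch to the intersection-based $\heavyprobapprox$, and (ii) either prove your covering lemma and a correct nonadaptive dispatch rule, or adopt the per-bucket population threshold so the denominator is always $\Omega(1)$.
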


We also prove a matching lower bound in a stronger model:

\begin{restatable}{theorem}{lowerbound}
\label{theorem:non-adaptive_lower_bound}
Any non-adaptive randomized algorithm that has access to $\degreeq$, $\neighbourq$, $\edgeexistsq$, and $\randedgeq$ queries and outputs a $2$-approximation of the number of edges requires $\Omega(\sqrt{n})$ queries.
\end{restatable}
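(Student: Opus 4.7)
The plan is to apply Yao's minimax principle: it suffices to exhibit a distribution over pairs $(G_1, G_2)$ of $n$-vertex graphs with $m(G_2) > 4\, m(G_1)$, so that no single estimate $\widetilde m$ can be a $2$-approximation for both, and such that the transcripts produced by any deterministic non-adaptive algorithm making $q = o(\sqrt{n})$ queries of the four allowed types have total variation distance $o(1)$. Concretely, under $\sD_i$ ($i \in \{1,2\}$) I would pick a uniformly random subset $S_i \subset [n]$ of size $k_i$ (with $k_1 = \Theta(\sqrt{n})$ and $k_2$ chosen so that $\binom{k_2}{2}/\binom{k_1}{2} > 4$) and output the graph consisting of a clique on $S_i$ together with $n - k_i$ isolated vertices. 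The random placement of $S_i$ renders the distributions symmetric with respect to any pre-committed query pattern, which is what drives the indistinguishability argument.

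The analysis then proceeds query type by query type. For a $\degreeq$, $\neighbourq$, or $\edgeexistsq$ query on pre-committed arguments, the returned answer is non-trivial only when the queried vertex or pair touches $S_i$, an event of probability $O(k_i/n) = O(1/\sqrt{n})$ per query. By a union bound over $q$ queries, the probability that \emph{any} such query sees a non-trivial answer is $O(q/\sqrt{n}) = o(1)$ under both distributions, so the transcripts can be coupled on the complementary event and contribute $o(1)$ to the total variation. For the $\randedgeq$ queries, one first observes that the marginal distribution of a single random edge under $\sD_i$ is exactly uniform over $\binom{[n]}{2}$ by symmetry of the random $S_i$, so a single query yields no signal at all. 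The only distinguishing feature across $q$ random edges is the joint collision structure on the sampled pairs, which I would bound using a Poisson-approximation argument and subadditivity of TV. Putting all the per-query-type bounds together then gives $\mathrm{TV}(\text{transcripts}) = o(1)$ for $q = o(\sqrt{n})$, and Yao's principle yields the claimed $\Omega(\sqrt{n})$ lower bound.

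The main technical obstacle I expect is the random edge collision analysis: a naive birthday-paradox argument for the clique construction above gives a distinguishing threshold of only $q = \Theta(n^{1/4})$, since vertex collisions among $q$ uniform pairs drawn from a set of size $\Theta(\sqrt{n})$ become common at that scale. Resolving this requires either a more spread-out planted structure (e.g., a sparse subgraph rather than a clique on the $\Theta(\sqrt{n})$ special vertices) so that the vertex-collision rates under $\sD_1$ and $\sD_2$ agree up to $o(1/q)$, or a sharper information-theoretic bound on the distinguishing power of the $q$-edge joint distribution that matches the Poisson-count comparison between $\sD_1$ and $\sD_2$ to $o(1)$ throughout the regime $q = o(\sqrt{n})$. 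Calibrating the random-edge signal against the degree-query signal at the common $\sqrt{n}$ threshold is the technical crux of the proof.
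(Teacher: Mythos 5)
Your high-level strategy --- Yao's principle, a randomly placed planted substructure, a hitting bound showing local queries miss it, and a separate analysis for $\randedgeq$ queries --- matches the paper's approach. But the concrete construction you propose (cliques of two different sizes $k_1\neq k_2$ on random vertex sets) does not yield the $\Omega(\sqrt{n})$ bound, and you already identify exactly why: with $q$ random edges you observe $2q$ vertex-draws from a set of size $\Theta(\sqrt{n})$, and the vertex-collision rate under $\sD_1$ versus $\sD_2$ differs at $q=\Theta(n^{1/4})$, giving only an $n^{1/4}$ indistinguishability threshold. So the proposal as written has a genuine gap: it correctly diagnoses the obstruction but never closes it, leaving only two vaguely described repair directions.

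The paper's fix is cleaner than either of your suggested repairs and is worth internalizing. Rather than changing the \emph{size} of the planted vertex set between the two cases, keep the planted vertex set $S$ of fixed size $2\lceil\sqrt{n}\rceil+1$ in \emph{both} cases, choose a fixed dense host graph $H$ on $S$ with at least $n$ edges, and fix an injection $f$ from an $n$-element universe $\mathcal{U}$ into $E(H)$. The two cases differ only in \emph{which edges of $H$ are present}: either all $n$ edges $f(\mathcal{U})$, or a uniformly random $\lfloor n/2\rfloor-1$ of them. Now the local-query hitting argument goes through unchanged (the planted set has size $\Theta(\sqrt n)$ in both cases), and a $\randedgeq$ query, composed with $f^{-1}$, is \emph{exactly} a sample from a distribution over $\mathcal{U}$ that is either uniform on all of $\mathcal{U}$ or uniform on a random half of it. At that point you invoke the standard $\Omega(\sqrt n)$ birthday-paradox lower bound for distinguishing the uniform distribution on a set of size $n$ from the uniform distribution on a random subset of size roughly $n/2$ --- the support-size/uniformity-testing folklore bound --- and you're done. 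The reduction to this known one-dimensional sampling problem is exactly the ``sharper information-theoretic bound'' you were hoping for, and it sidesteps any Poisson-approximation or joint-collision bookkeeping: since the edge-to-$\mathcal{U}$ map is a bijection onto $f(\mathcal{U})$ and the planted vertex set is identical in both cases, there is no residual vertex-collision signal to worry about.
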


\subsection{Notation}

For any $n \in \mathbb{N}$, let $[n] = \{1, 2, \dots, n\}$. We use $\widetilde{O}(\cdot)$ (and similarly, $\widetilde{\Theta}(\cdot)$ and $\widetilde{\Omega}(\cdot)$) to suppress factors polynomial in $\log \vertexcount$ and $1/\approxerror$. An event is said to occur \emph{with high probability} if it occurs with probability at least $1 - \vertexcount^{-c}$ for some constant $c > 0$. We write $a \approx_{\approxerror} b$ (equivalently, $a = (1 \pm \approxerror)b$) to indicate that $a$ is a $(1 \pm \approxerror)$-approximation of $b$. The input graph is $\graph = (\vertexset, \edgeset)$, where $\vertexcount = |\vertexset|$ and $\edgecount = |\edgeset|$.

\section{Prior works on edge estimation}

% In the context of our work, we summarise the prior results on edge estimation of graphs and the adaptivity of such algorithms in the context of query complexity in Table~\ref{table: summary_prior_work}. 

Table~\ref{table: summary_prior_work} summarises prior results on edge estimation and the {\em adaptivity–query-complexity} trade-offs achieved by existing algorithms.

\remove{
\noindent\textbf{Edge Estimation: } The problem of estimating the number of edges in a graph has been studied extensively under various query models. Feige \cite{feige04} initiated this line of work by providing a 2-approximation algorithm using only degree queries. After that, Goldreich and Ron \cite{avg_deg_danaron08} have improved it to $\fbrac{1 \pm \approxerror}$ -approximation by combining degree and neighbor queries. When access to a random edge query is permitted, several works \cite{motwani07, EdenRS19, BerettaT24, beretta2025} have developed sub-linear algorithms for edge estimation. In addition, some works have explored query models that require a subset of vertices, such as Bipartite Independent Set and Independent Set queries \cite{harpeled18, Xichen19}. 

% \noindent\textbf{Adaptivity: } We study the adaptive nature in the context of queries, which are of two types \textit{non-adaptive} and \textit{adaptive}. A randomized algorithm is said to be \textit{non-adaptive} if it determines all its queries based on its explicit input parameters and internal coin tosses, independent of answers provided to previous queries. Otherwise, we say that the algorithm is non-adaptive. It is often desirable for queries to be non-adaptive. This allows all the queries to be performed before the execution of the algorithm. Adaptivity of the related work has been summarized in Table \ref{table: summary_prior_work}.

% Feige \cite{feige04} presented an algorithm for $(2 + \approxerror)$-approximation of $\edgecount$ with $O\fbrac{\sqrt{n}\cdot \log n/\approxerror}$ non-adaptive $\degreeq$ queries. Feige also showed that a $(2 - o(1)$-approximation algorithm requires $\Omega(n)$ $\degreeq$ queries.

% Goldreich and Ron \cite{avg_deg_danaron08} gave an $\fbrac{1 \pm \approxerror}$ -approximation algorithm with query complexity  $O(\vertexcount/\sqrt{\edgecount})\cdot \poly{\log{\vertexcount}, 1/\approxerror}$. Their algorithm uses $\degreeq$ and random $\neighbourq$ queries adaptively to find a random neighbour of a vertex degree needs to be known beforehand. This makes their algorithm adaptive.

\noindent\textbf{Adaptivity: } We study the adaptivity of edge estimation algorithms in terms of their query strategies, which can be either \textit{non-adaptive} or \textit{adaptive}. A randomized algorithm is said to be \textit{non-adaptive} if it determines all its queries in advance based solely on its input parameters and internal randomness without using the answers to previous queries. Otherwise, the algorithm is considered \textit{adaptive}. 
Non-adaptive algorithms are often preferable, as they allow queries to be issued in parallel or offline, before any query responses are received. Table~\ref{table: summary_prior_work} summarizes the adaptivity of existing edge estimation algorithms.
}

Feige~\cite{feige04} presented a non-adaptive algorithm that gives a $(2 \pm \approxerror)$-approximation to $\edgecount$ using $\widetilde{O}\fbrac{\sqrt{n} }$ $\degreeq$ queries. He also proved that any algorithm achieving a $(2 - o(1))$-approximation must make $\Omega(n)$ $\degreeq$ queries.

Goldreich and Ron~\cite{avg_deg_danaron08} gave a $(1 \pm \approxerror)$-approximation algorithm with query complexity $\widetilde{O}(\vertexcount/\sqrt{\edgecount})$. Their algorithm uses both $\degreeq$ and random $\neighbourq$ queries adaptively. Their algorithm queries a random neighbour of a vertex, which requires first querying the degree of a vertex, making subsequent queries dependent on prior answers. This makes their algorithm adaptive.

Motwani et al.~\cite{motwani07} presented a $(1 \pm \approxerror)$-approximation algorithm for estimating the sum of values in a set via weighted sampling, but it can be adapted to estimate the number of edges in a graph using $\widetilde{O}(\vertexcount^{1/3})$ $\randedgeq$ and $\degreeq$ queries. The key idea is that sampling a random edge (via $\randedgeq$) and then selecting one of its endpoints yields a vertex sampled with probability proportional to its degree. However, note that simulating the weighted sampling model itself requires the algorithm to be adaptive as the \degreeq{} queries have to be made depending on the output of the \randedgeq{}. Recently,~\cite{BerettaT24} established upper and lower bounds on the algorithm that are tight in terms of all parameters, including $\approxerror$. 

Very recently,~\cite{beretta2025} studied the problem of edge-estimation strictly in the graph context. They considered several models for the queries, and assumes that the graph has no isolated vertices. Firstly, they show that if only \degreeq{}, \neighbourq{}, and  \randedgeq{} queries are available, the query complexity of the problem is $\widetilde{\Theta}(n^{1/3})$. If the \edgeexistsq{} query is also available, then the query complexity of the problem improves to $\widetilde{\Theta}(n^{1/4})$. They also establish tighter bounds when the algorithm is given access to structural queries, and establish bounds with unknown $\vertexcount$. However, all the algorithms they present are adaptive.

% Combining such samples with degree queries enables an efficient approximation of the total edge count through degree-weighted sampling. Recently, Beretta and Tetek~\cite{BerettaT24} proposed an adaptive algorithm for estimating the number of edges with query complexity $O\fbrac{(\edgecount/\vertexcount)\cdot \poly{\log{\vertexcount}, 1/\approxerror}}$. Similar to the approach of Motwani et al., their algorithm combines $\randedgeq$ and $\degreeq$ queries to perform degree-proportional sampling, which is used to approximate the total number of edges. 

% The results are summarized in Table~\ref{table: summary_prior_work}.

% Eden et al.~\cite{EdenRS19} presented a $(1 \pm \approxerror)$-approximation algorithm with query complexity $O(\sqrt{\vertexcount} \cdot\poly{\log{\vertexcount}, 1/\approxerror})$, using both $\degreeq$ and $\neighbourq$ queries. Their algorithm simulates a random edge $(\vertex, \altvertex)$ by first sampling a uniform random vertex $\vertex$ and then selecting a random neighbor $\altvertex$ of $\vertex$ using a $\neighbourq$ query. Since sampling a random neighbor requires knowledge of vertex degrees in advance, their algorithm is inherently adaptive. 

\begin{table}%[ht!]
    \centering
\begin{tabular}{ c | c | c | c | c }
 \hline
 \hline
 Queries & Approx & Query Complexity & Adaptivity &  Reference \\ 

 \hline
 \hline
 $\degreeq$ & $2 \pm \approxerror$ & $\vertexcount/ \sqrt{\edgecount}$ & {Yes\tablefootnote{Given a lower bound $\widehat{\edgecount}$ on the number of edges $\edgecount$, the algorithm becomes non-adaptive. But without such a lower bound, the algorithm has to guess a good bound adaptively.}} & \cite{feige04} \\ 
 $\degreeq$ & $1 \pm \approxerror$ & $\vertexcount^{2}/\edgecount$ & Yes & \cite{eden_lower_bound18} \\
 $\degreeq$ + $\neighbourq$ & $1 \pm \approxerror$ & $\vertexcount/\sqrt{\edgecount}$ & Yes & \cite{avg_deg_danaron08} \\ 

  $\randedgeq$ + $\degreeq$ & $1 \pm \approxerror$ & $\vertexcount^{1/3}$ & Yes & \cite{motwani07,BerettaT24,beretta2025} \\

    $\randedgeq$ + $\edgeexistsq$ + $\degreeq$ & $1 \pm \approxerror$ & $\vertexcount^{1/4}$ & Yes & \cite{beretta2025} \\
 \hline
 \rowcolor{green!20}
 $\randedgeq$ + $\degreeq$ & $1 \pm \approxerror$ & $\sqrt{\vertexcount}$ & No & This work \\
 \hline
 \hline
 \end{tabular}
 \caption{\centering{An overview of prior works on edge estimation. The query complexity terms ignores $O(\mathrm{poly}(1/\approxerror,\log\vertexcount))$ terms.}}
 % \textcolor{purple}{The qery complexity terms also ignores the dependence on $\poly{\log n}$.}}
 \label{table: summary_prior_work} 
\end{table}

\section{Overview of the algorithm}
\label{sec:overview_new}

Our non-adaptive algorithm use both \degreeq{} and $\randedgeq$ queries, and builds on the works of Feige~\cite{feige04}, and Goldreich and Ron~\cite{avg_deg_danaron08}. We will first begin by outlining our approach for the case when $\edgecount \geq \vertexcount/2$, and later show how to make this approach work for all values of $\edgecount$.
Recall that we are given an unknown graph $G = (V,E)$ which can be accessed through \degreeq{} and $\randedgeq$ queries, and an approximation parameter $\approxerror$.

We first fix a threshold $\vertexdensity > 0$ which depends only on $\approxerror$ and $\vertexcount$, and a parameter $\gamma = \Theta(\approxerror)$. Suppose we have full access to the whole graph $G$, then we could partition the vertices of the graph into $t = \lceil \log _{1+\gamma} n \rceil +1 $ buckets $B_{0}, \dots, B_{t-1}$ where the degrees of each vertex in $B_{i}$ is between $(1+\gamma)^{i-1}$ and $(1+\gamma)^{i}$. Now consider the partition the set of buckets into two sets $\lightbucket$ (which stands for {\em light} buckets) and $\heavybucket$ (which stands for {\em heavy} buckets). A bucket $B_{i}$ is in the set $\heavybucket$ if 
$$
    |B_{i}| \geq \vertexdensity \vertexcount, 
$$
otherwise $B_{i}$ is put in the set $\lightbucket$. Observe that for sufficiently small $\gamma$ the following approximation holds:
\begin{align}\label{eqn:edgecount_approximation_with_buckets}
    \sum_{i = 1}^{t} |B_{i}| (1+\gamma)^{i-1} \approx_{\approxerror} 2\edgecount,
\end{align}
Let $|\edgeset(\heavybucket)|$, $|\edgeset(\lightbucket)|$, and $|\edgeset(\heavybucket,\lightbucket)|$ denote the number of edges with both endpoints in heavy buckets, with both endpoints in light buckets, and with one endpoint in a light bucket and the other in a heavy bucket, respectively. Therefore,
\begin{align*}
    m = |\edgeset(\heavybucket)| + |\edgeset(\lightbucket)| + |\edgeset(\heavybucket,\lightbucket)|.
\end{align*}
Note that 
\begin{align*}
    \sum_{i: B_{i} \in \lightbucket} |B_{i}| (1+\gamma)^{i-1}  \approx_{\approxerror} 2|\edgeset(\lightbucket)| + |\edgeset(\heavybucket,\lightbucket)|
\end{align*}
One of the crucial ideas of this approach is that one can set the threshold $\vertexdensity$ in such a way that $|\edgeset(\lightbucket)| = o(\widehat{\edgecount})$, and 
\begin{align}\label{eqn:Feige_inequality_1}
    \sum_{j: B_{j} \in \heavybucket} |B_{j}| (1+\gamma)^{i-1} \approx_{\approxerror} 2 |\edgeset(\heavybucket)| + |\edgeset(\heavybucket,\lightbucket)| = 2m-|\edgeset(\heavybucket,\lightbucket)|.
\end{align}
The above equation implies that if we can get a $(1\pm \approxerror)$-approximation $s_{i}$ on the size $|B_{i}|$ of each heavy bucket $B_{i} \in \heavybucket$, then the following quantity will be a $(2 \pm  \approxerror)$-approximation of $\edgecount$:
$$
    \frac{1}{2}\left(\sum_{i: B_{i} \in \heavybucket} s_{i}(1+\gamma)^{i-1} \right).
$$
The above bucketing strategy was developed by Goldreich and Ron~\cite{avg_deg_danaron08}. 
%\textcolor{purple}{Here, we may cite Here, we  may cite Feige~\cite{feige04} also.}

However, we want to get a $(1\pm\approxerror)$-approximation of $m$. Let $\heavydeg$ denote the sum of the degrees of the vertices in the heavy buckets and $\heavyprob := \frac{\heavydeg}{2 \edgecount}$. Observe that 
$$
    \heavydeg = 2|\edgeset(\heavybucket)| + |\edgeset(\heavybucket,\lightbucket)|.
$$
Note that the threshold $\vertexdensity > 0$ for characterizing heavy and light buckets is chosen in such a way that 
$$
    \heavyprob \geq \frac{1}{2} - \Theta{ (\approxerror) }
$$
We have already seen, from Equation~\eqref{eqn:Feige_inequality_1}
\begin{align}
    \sum_{j: B_{j} \in \heavybucket} |B_{j}| (1+\gamma)^{i-1} \approx_{\approxerror} 2 |\edgeset(\heavybucket)| + |\edgeset(\heavybucket,\lightbucket)| = \heavydeg
\end{align}
If we can get a $(1\pm \approxerror)$-approximation $s_{i}$ on the size $|B_{i}|$ of each heavy bucket $B_{i} \in \heavybucket$, then by uniformly sampling $\widetilde{O}\left( \sqrt{\vertexcount}\right)$ vertices and querying their degrees using \degreeq{} queries we can obtain a $(1 \pm  \approxerror)$-approximation of $ \heavydeg $. Let us denote this estimate by $\heavydegapprox$. Now observe that if we can get a $(1 \pm \approxerror)$-approximation of $\widetilde{\heavyprob}$ then ,
$$
    \frac{\heavydegapprox}{\widetilde{\heavyprob}} \approx_{\approxerror} 2m.
$$
We will now show how we can get an $(1 \pm \approxerror)$-approximation of $\widetilde{\heavyprob}$ for $\heavyprob$ using \degreeq{} and $\randedgeq$ queries. Suppose we have access to $\widetilde{O}(1)$ vertices, together with the degrees of the selected vertices, from the graph where each vertex is selected in the sample proportional to its degree. Then using standard averaging arguments and concentration inequalities, we can get an $(1 \pm \approxerror)$-approximation of $\heavyprob$. This can be done by sampling $\widetilde{O}\left( \sqrt{\vertexcount}\right)$ vertices in two different ways and then taking their intersection. The first sample is generated by uniformly sampling $\widetilde{O}\left( \sqrt{\vertexcount}\right)$ vertices and querying their degrees using \degreeq{} queries. The second sample is generated by sampling $\widetilde{O}\left( \sqrt{\vertexcount}\right)$  edges uniformly at random using $\randedgeq$ queries and for each sampled edge, we keep one vertex uniformly at random with probability $1/2$. For a detailed argument, please refer to Section \ref{sec:nonadapt-algo}.

The above argument holds when $m \geq n/2$. For the case when $m \leq n/2$, we can use a standard birthday paradox type argument. The idea is to sample roughly $\widetilde{O}(\sqrt{n})$ random edges uniformly at random and count the number of \emph{collisions}, that is, how many times the same edge appears more than once. The expected number of collisions is roughly $\resamplesize^2 / (2m)$, where $\resamplesize$ is the number of samples. So if we observe even one collision, it suggests that $m = O(n)$. In this sparse regime, the number of collisions provides useful information about the number of edges, and can be used to estimate $m$.

\section{Non-adaptive algorithm using degree and random edges queries}
\label{sec:nonadapt-algo}

We now present a non-adaptive sublinear algorithm \nonadapalgo{} for a $(1 \pm \approxerror)$ estimate of the number of edges in a graph. {Before we start, let us formally define the partition of the vertex set $\vertexset$ into buckets. We define a bucket $\bucket_i$ as:}
\begin{align*}
    \bucket_i = \sbrac{\vertex : \deg(\vertex) \in \left(\fbrac{1 + \preapprox}^{i -1}, \fbrac{1 + \preapprox}^i\right]}
\end{align*}
{Observe that we can partition the vertex set into $\ceil{\log_{1+\preapprox}\vertexcount}$ buckets. We denote $\bucketset = \{\bucket_1,\bucket_2,\ldots,$ $\bucket_{\ceil{\log_{1+\preapprox}\vertexcount}}\}$ to be the set of all buckets. In Section~\ref{sec:overview_new}, we defined the partition of $\bucketset$ into $\heavybucket$ and $\lightbucket$ in terms of the exact size of the buckets, $\size{\bucket_i}$. However, in our algorithm, we cannot make this distinction exactly. Hence, our algorithm decides the partition based on samples, and we show that the decision satisfies the required bounds with high probability. In what follows, the choice of $\heavybucket = \sbrac{\bucket_i \in \bucketset~|~i \in \densebucketset}$ is as decided in line~\ref{alg: non-adap line 3} of Algorithm~\ref{alg: non-adap}, and $\lightbucket$ is defined as $\bucketset \setminus \heavybucket$.}

In Section~\ref{subsec:non-adp non-sparse graph}, we first describe an algorithm \nonadapalgo{} for the graphs where $\edgecount \geq \vertexcount/2$. In Section~\ref{subsec: Complete Algo}, we present an algorithm that works for all graphs.

\subsection{Edge estimation for the case $\edgecount \geq \vertexcount/2$} \label{subsec:non-adp non-sparse graph}

In this section, we present a non-adaptive algorithm assuming $\edgecount \geq \vertexcount/2$. We show the following upper bound.

\begin{theorem}\label{theorem: non_adaptive when m>cn}
    There is a randomized algorithm \nonadapalgo{} that has access to $\degreeq$ and $\randedgeq$ queries to a graph $\graph = (\vertexset, \edgeset)$ with $\size{\vertexset} = \vertexcount$, $\size{\edgeset} = \edgecount$, and $\edgecount\geq \vertexcount/2$. \nonadapalgo{} outputs an estimate $\widetilde{\edgecount}$ of $\edgecount$ satisfying $(1-\approxerror)\edgecount \leq \widetilde{\edgecount} \leq (1 + \approxerror)\edgecount$ using $\widetilde{O}\fbrac{ \frac{\sqrt{\vertexcount}}{\approxerror^{2.5}} }$ queries.
\end{theorem}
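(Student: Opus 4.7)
The plan is to turn the sketch of Section~\ref{sec:overview_new} into a rigorous argument. Fix $\gamma = c_1\varepsilon$ and density threshold $\rho = c_2\sqrt{\varepsilon}/\sqrt{\vertexcount}$ for sufficiently small absolute constants $c_1, c_2$. Before issuing any queries, non-adaptively draw a uniform vertex sample $S$ of size $K = \widetilde{O}(\sqrt{\vertexcount}/\varepsilon^{2.5})$ and a random edge sample $T$ of size $A = \widetilde{O}(\sqrt{\vertexcount}/\varepsilon^{2.5})$, keeping for each $e \in T$ a uniformly random endpoint $r(e)$. Query all degrees in $S \cup \{r(e) : e \in T\}$ in a single batch.

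First, I would rigorously realize the heavy/light classification. Declare bucket $\bucket_i$ heavy if $|S \cap \bucket_i|/K \ge (1-\varepsilon)\rho$, and let $U$ denote the union of declared-heavy buckets. A multiplicative Chernoff bound on the binomial $|S \cap \bucket_i|$ (which has mean $K|\bucket_i|/\vertexcount$), combined with a union bound over the $O(\log_{1+\gamma}\vertexcount)$ buckets, shows that w.h.p.\ every bucket with $|\bucket_i| \ge \rho \vertexcount$ is declared heavy and every bucket with $|\bucket_i| \le (1-2\varepsilon)\rho \vertexcount$ is declared light. The constraint $K \rho = \widetilde{\Omega}(1/\varepsilon^2)$ is exactly what pins down the stated $\rho$.

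Next, for the degree-mass estimator, set $\widetilde{|\bucket_i|} = \vertexcount\,|S \cap \bucket_i|/K$ for each declared-heavy $i$; the same Chernoff bound gives $\widetilde{|\bucket_i|} = (1\pm\varepsilon)|\bucket_i|$ w.h.p. Since $\deg(v) \in ((1+\gamma)^{i-1},(1+\gamma)^i]$ for every $v \in \bucket_i$, taking $\heavydegapprox := \sum_{i \in \densebucketset}\widetilde{|\bucket_i|}(1+\gamma)^{i-1}$ gives $\heavydegapprox = (1 \pm O(\varepsilon))\,d_U$ after absorbing the $(1+\gamma)$-discretization, where $d_U := \sum_{v \in U}\deg(v)$. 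For the proportion estimator, note that a uniformly random endpoint of a uniformly random edge is a vertex drawn with probability $\deg(v)/(2m)$, so $\heavyprobapprox := |\{e \in T : r(e) \in U\}|/A$ is an unbiased estimator of $p_U := d_U/(2m)$; provided $p_U = \Omega(1)$, a Chernoff bound yields $\heavyprobapprox = (1\pm\varepsilon)\,p_U$ w.h.p.\ with $A = \widetilde{\Omega}(1/\varepsilon^2)$ samples. The final output $\widetilde m := \heavydegapprox/(2\heavyprobapprox)$ then equals $(1\pm O(\varepsilon))\,m$, since $m = d_U/(2p_U)$ holds tautologically for any non-empty $U$; rescaling $\varepsilon$ by a constant yields the theorem, with total query count $|S|+|T| = \widetilde{O}(\sqrt{\vertexcount}/\varepsilon^{2.5})$.

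The main obstacle I expect is the combinatorial lemma $p_U \ge 1/2 - O(\varepsilon)$, which is where the hypothesis $m \ge \vertexcount/2$ is essentially used. Following~\cite{avg_deg_danaron08,feige04}, one first proves $p_{V_\heavybucket} \ge 1/2 - O(\varepsilon)$ for the \emph{deterministic} heavy set $V_\heavybucket = \bigcup_{i:|\bucket_i|\ge \rho \vertexcount}\bucket_i$, by showing that for this $\rho$ the light-bucket contribution to the total degree is small under $m \ge \vertexcount/2$. The random set $U$ used by the algorithm differs from $V_\heavybucket$ only in borderline buckets with $|\bucket_i| \in ((1-2\varepsilon)\rho \vertexcount,\rho \vertexcount)$; a careful accounting is needed to show $p_U = p_{V_\heavybucket} \pm O(\varepsilon) \ge 1/2 - O(\varepsilon)$, which closes the loop and justifies the Chernoff guarantee on $\heavyprobapprox$ at the stated sample size.
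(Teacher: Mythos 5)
Your proposal follows the paper's high-level decomposition (bucketing, heavy/light classification, degree-mass estimator, proportion estimator), but it breaks the non-adaptivity requirement at the critical step. To evaluate $\heavyprobapprox := |\{e \in T : r(e) \in U\}|/A$, you must know which bucket $r(e)$ lies in, i.e.\ you must know $\deg(r(e))$ — and you say explicitly ``Query all degrees in $S \cup \{r(e) : e \in T\}$.'' But $r(e)$ is not known until the $\randedgeq$ oracle answers, so the $\degreeq$ queries on $\{r(e)\}$ depend on the outputs of the $\randedgeq$ queries. That is precisely adaptivity, and it is precisely the obstacle the paper is designed to overcome; if one were allowed to query degrees of random-edge endpoints, the known adaptive algorithms already achieve $\widetilde{O}(n^{1/3})$, so the $\widetilde{\Theta}(\sqrt{n})$ bound here is meaningful only because that move is forbidden.

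The missing idea is the paper's \emph{intersection estimator}: instead of learning $\deg(r(e))$ directly, the paper sets $\randomvertices_i := \randomvertices \cap \sampledvertices_i$ and uses
\[
\heavyprobapprox = \frac{\vertexcount}{\size{\sampledvertices}}\cdot\frac{1}{\size{\randomvertices}}\sum_{i\in\densebucketset}\size{\randomvertices_i},
\]
i.e.\ it counts collisions between the random-edge endpoints $T$ and the uniformly drawn, degree-queried vertices $S$, so that bucket membership of any $r(e)$ is certified ``for free'' whenever $r(e)$ happens to land in $S$. This requires \emph{no} degree queries outside $S$ and is therefore non-adaptive. It is also what drives the $\sqrt{n}$ query count: since a fixed endpoint lands in $S$ with probability $\Theta(\size{\sampledvertices}/n)$, one needs $\size{\sampledvertices}\cdot\size{\randomvertices} = \widetilde{\Omega}(n/\varepsilon^2)$ to concentrate (the paper takes $\size{\sampledvertices}=\widetilde{O}(\sqrt{n}/\varepsilon^{2.5})$ and $\size{\randomvertices}=\widetilde{O}(\sqrt{\varepsilon n})$), whereas your direct estimator would only need $A=\widetilde{O}(1/\varepsilon^2)$, a tell-tale sign that it is solving a different (adaptive) problem. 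Your sketch of the combinatorial lemma $p_U \ge 1/2 - O(\varepsilon)$ and the heavy/light concentration bounds is consistent with the paper, but without the intersection trick the proof does not establish the theorem as stated.
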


\begin{algorithm}
\caption{\nonadapalgo{}}\label{alg: non-adap}
\begin{algorithmic}[1]
\Require \degreeq{} and \randedgeq{} access to the graph $\graph$ with $\edgecount \geq \vertexcount/2$
\State Uniformly and independently select $\widetilde{O}\fbrac{ \frac{\sqrt{\vertexcount}}{\approxerror^{2.5}} }$ vertices from $\vertexset$ and let $\sampledvertices$ be the multi-set of selected vertices. Make \degreeq{} query on vertices in $\sampledvertices$ \label{alg: non-adap line 1}.

\State For $i \in \sbrac{1, 2, \dots, t - 1}$, let $\sampledvertices_i = \sampledvertices \cap \bucket_i$. \label{alg: non-adap line 2}

\State Let $\densebucketset = \sbrac{i : \frac{\size{\sampledvertices_i}}{\size{\sampledvertices}} \geq  \frac{1}{\bucketcount} \sqrt{\frac{\approxerror }{6} \cdot \frac{1}{ \vertexcount} }}$. \label{alg: non-adap line 3}

\State Compute $\heavydegapprox = \frac{\vertexcount}{\size{\sampledvertices}} \sum_{i \in \densebucketset} \size{\sampledvertices_i} (1 + \preapprox)^i$. \label{alg: non-adap line 4}

\State Make $\widetilde{O}\fbrac{\sqrt{\approxerror \vertexcount}}$ \randedgeq{} queries. For each edge, pick one of its endpoint vertices with probability $1/2$ and let $\randomvertices$ be the multi-set of such selected vertices. \label{alg: non-adap line 5}

\State For $i \in \sbrac{0, 1, \dots, t - 1}$, let $\randomvertices_i = \randomvertices \cap \sampledvertices_i$. \label{alg: non-adap line 6}

\State Compute $\heavyprobapprox = \frac{\vertexcount}{\size{\sampledvertices} } \cdot \frac{1}{\size{\randomvertices}} \sum_{i \in \densebucketset} \size{\randomvertices_i}$ \label{alg: non-adap line 7}

\State Output $(1/2) \cdot \heavydegapprox / \heavyprobapprox$ \label{alg: non-adap line 8}
\end{algorithmic}
\end{algorithm}

% \blue{Recall that we partition the vertex set to }

We begin by sampling a multiset of random vertices $\sampledvertices$ (Line~\ref{alg: non-adap line 1}) and determine the bucket $\bucket_i$ each vertex belongs to using $\degreeq$ queries (Line~\ref{alg: non-adap line 2}). A bucket $\bucket_i$ is classified as heavy if the fraction of sampled vertices from it exceeds a threshold; such indices form the set $\densebucketset$ (Line~\ref{alg: non-adap line 3}). We define the degree sum of the heavy buckets as
\[
\heavydeg = \sum_{i \in \densebucketset} \deg(\bucket_i)
\]

% \textcolor{blue}{May be we need to elaborate the notion of $B_i$ here since there is no preliminaries section now?}

The following lemma establishes a good approximation of the degree sum of heavy buckets. The proof is along the lines of~\cite{avg_deg_danaron08}
\begin{lemma}\label{lemma: approx heavy vertices}
    In line \ref{alg: non-adap line 4} of \nonadapalgo{}, we have $\heavydegapprox = \fbrac{1 \pm \frac{\approxerror}{3}} \heavydeg$ with high probability.
\end{lemma}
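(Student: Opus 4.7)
The plan is to combine standard Chernoff concentration of bucket occupancies with the geometric coarseness of the bucket partition. Each vertex in $\sampledvertices$ is drawn independently and uniformly from $\vertexset$, so $\size{\sampledvertices_i}$ is a sum of i.i.d.\ indicators with mean $\size{\sampledvertices}\cdot\size{\bucket_i}/\vertexcount$. Write $\rho := \frac{1}{\bucketcount}\sqrt{\approxerror/(6\vertexcount)}$ for the sample-based threshold used to build $\densebucketset$ in line~\ref{alg: non-adap line 3}.

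First I would introduce the ``good'' event $\sE$ on which (i) for every bucket $\bucket_i$ with $\size{\bucket_i}/\vertexcount \ge \rho/2$, the empirical count satisfies $\size{\sampledvertices_i} = (1 \pm \approxerror/6)\,\E\!\left[\size{\sampledvertices_i}\right]$, and (ii) for every bucket with $\size{\bucket_i}/\vertexcount < \rho/2$, the empirical frequency satisfies $\size{\sampledvertices_i}/\size{\sampledvertices} < \rho$. Both halves follow from multiplicative Chernoff bounds. For (i), when $\size{\bucket_i}/\vertexcount \ge \rho/2$ we have $\E\!\left[\size{\sampledvertices_i}\right] \ge \size{\sampledvertices}\rho/2$, which, combined with $\size{\sampledvertices} = \widetilde{O}(\sqrt{\vertexcount}/\approxerror^{2.5})$, is large enough that a relative deviation of $\approxerror/6$ has probability at most $\vertexcount^{-c}$. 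For (ii), a one-sided Chernoff upper tail bound controls the probability that $\size{\sampledvertices_i}/\size{\sampledvertices}$ reaches $\rho$ by the same amount. Union-bounding over the $\bucketcount = O((\log \vertexcount)/\approxerror)$ buckets yields $\Pr[\sE] \ge 1 - \vertexcount^{-c'}$.

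Now condition on $\sE$. For every $i \in \densebucketset$ we must have $\size{\bucket_i}/\vertexcount \ge \rho/2$, since otherwise part (ii) of $\sE$ would force $\size{\sampledvertices_i}/\size{\sampledvertices} < \rho$ and exclude $i$. Hence part (i) applies and
\[
\frac{\vertexcount \cdot \size{\sampledvertices_i}}{\size{\sampledvertices}} = \left(1\pm \frac{\approxerror}{6}\right)\size{\bucket_i} \qquad\text{for every }i \in \densebucketset.
\]
Multiplying by $(1+\preapprox)^i$ and summing over $i \in \densebucketset$ gives
\[
\heavydegapprox = \left(1\pm \frac{\approxerror}{6}\right)\sum_{i \in \densebucketset}\size{\bucket_i}(1+\preapprox)^i.
\]
The bucket definition $(1+\preapprox)^{i-1} < \degree{\vertex} \le (1+\preapprox)^i$ for $\vertex \in \bucket_i$ yields $\degree{\bucket_i} \le \size{\bucket_i}(1+\preapprox)^i \le (1+\preapprox)\degree{\bucket_i}$, so summing over $\densebucketset$ gives $\sum_{i \in \densebucketset}\size{\bucket_i}(1+\preapprox)^i = (1\pm\preapprox)\heavydeg$. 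Chaining the two approximations and taking $\preapprox = \Theta(\approxerror)$ small enough (e.g.\ $\preapprox \le \approxerror/18$) collapses the product $(1\pm\approxerror/6)(1\pm\preapprox)$ to $(1\pm\approxerror/3)$, as required.

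The main obstacle is the two-sided calibration of the sample size against the sample-based threshold $\rho$: truly-heavy buckets must be multiplicatively $\approxerror/6$-approximated, and truly-light buckets must not sneak into $\densebucketset$, both simultaneously with high probability under a union bound over $\bucketcount$ buckets. This is precisely what forces the $\widetilde{O}(\sqrt{\vertexcount}/\approxerror^{2.5})$ sample size: we need $\size{\sampledvertices}\rho \gtrsim (\log \vertexcount)/\approxerror^2$ so that the Chernoff exponent $(\approxerror/6)^2\,\size{\sampledvertices}\rho$ survives the union bound, and substituting $\rho = \Theta(\sqrt{\approxerror/\vertexcount}/\log \vertexcount)$ gives exactly this scaling.
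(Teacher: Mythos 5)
Your proof is correct and follows essentially the same route as the paper: Chernoff concentration of the sampled bucket frequencies, a sieving step ensuring every index placed in $\densebucketset$ corresponds to a bucket whose true density exceeds a fixed fraction of the empirical threshold, a union bound over the $\bucketcount$ buckets, and the geometric-coarseness sandwich $\heavydeg \le \sum_{i\in\densebucketset}\size{\bucket_i}(1+\preapprox)^i \le (1+\preapprox)\heavydeg$ chained with a small enough $\preapprox = \Theta(\approxerror)$. The paper uses the true-density cutoff $\vertexdensity=(1/\bucketcount)\sqrt{\approxerror/(8\vertexcount)}$ and $\preapprox=\approxerror/10$ rather than your $\rho/2$ and $\preapprox\le\approxerror/18$, but these are only cosmetic choices of constants.
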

\begin{proof}
    Recall from our definition of buckets, 
    for all $i \in [t-1]$, we have
    \[
        \bucket_i = \sbrac{\vertex : \deg(\vertex) \in \left(\fbrac{1 + \preapprox}^{i -1}, \fbrac{1 + \preapprox}^i\right]},   
    \]
    and $\heavydeg =   \sum_{i \in \densebucketset} \deg(\bucket_i)$. Then we have,
\[
\heavydeg \leq  \sum_{i \in \densebucketset} \size{\bucket_i} (1 + \preapprox)^i \leq  (1 + \preapprox)\sum_{i \in \densebucketset} \size{\bucket_i} (1 + \preapprox)^{i-1} \leq (1 + \preapprox) \heavydeg
\]
Let $X_j$ be the indicator random variable which is $1$ if the $j$-th vertex of $\sampledvertices$ is in bucket $\bucket_i$, otherwise it is $0$. Then, $\size{\sampledvertices_i} = \sum_{j\in [\size{\sampledvertices}]} X_j$. Therefore, the expected size of $\size{\sampledvertices_i}$ is
\[
\E\tbrac{\size{\sampledvertices_i}} = \sum_{j\in [\size{\sampledvertices}]} \E\tbrac{X_j} = \size{\sampledvertices} \cdot \frac{\size{\bucket_i}}{\vertexcount}
\]
% \dc{Does $i \in \densebucketset$ imply $\size{\bucket_i} \geq \vertexdensity\vertexcount$?}
Let $\vertexdensity = (1/\bucketcount)\sqrt{(\approxerror/8) \cdot (1/\vertexcount)}$, then using the upper tail estimate of the Chernoff bound (see~\cite[Theorem~4.4]{Mitzenmacher_Upfal_2005}), we get that
with high probability, for every $i$ with $\size{\bucket_i} < \vertexdensity\vertexcount$
\[
\frac{\size{\sampledvertices_i}}{\size{\sampledvertices}} \leq \frac{1}{\bucketcount} \sqrt{\frac{\approxerror }{6} \cdot \frac{1}{ \vertexcount} }
\]
The above inequality ensures that with high probability for every $i \in \densebucketset$ we have $\size{\bucket_i} \geq \vertexdensity\vertexcount$.

\paragraph{Upper Estimation:} For $\vertexdensity = (1/\bucketcount)\sqrt{(\approxerror/8) \cdot (1/\vertexcount)}$ and for $\size{\sampledvertices} = \widetilde{O}\fbrac{ \sqrt{\vertexcount} / \approxerror^{2.5} }$, applying the upper estimate of the multiplicative Chernoff bound (see~\cite[Theorem~4.4]{Mitzenmacher_Upfal_2005}) for every $i$ such that $\size{\bucket_i} \geq \vertexdensity\vertexcount$, we have,
\begin{align*}
    \Pr\tbrac{\frac{\size{\sampledvertices_i}}{\size{\sampledvertices}} > \fbrac{1 + \frac{\approxerror}{6}} \frac{\size{\bucket_i}}{\vertexcount}} &\leq \exp{\fbrac{-\frac{\approxerror^2\E[\size{\sampledvertices_i}]}{72}}} \\
    &= \exp{\fbrac{-\frac{\approxerror^2 \cdot \size{\sampledvertices} \cdot \size{\bucket_i}}{72\vertexcount}}} &\text{$\fbrac{\E\tbrac{\size{\sampledvertices_i}} = \size{\sampledvertices} \cdot \frac{\size{\bucket_i}}{\vertexcount}}$}\\
    &\leq \exp{\fbrac{-\frac{\approxerror^2\cdot\size{\sampledvertices}\cdot\vertexdensity}{72}}}&\text{$\fbrac{\size{\bucket_i} \geq \vertexdensity\vertexcount}$}\\
    &\leq \frac{1}{\log \vertexcount \cdot \vertexcount^c}
    &\mbox{$\fbrac{\size{\sampledvertices} = \widetilde{O}\fbrac{\frac{1}{\vertexdensity\approxerror^2}} = \widetilde{O}\fbrac{{\frac{\sqrt{\vertexcount}}{\approxerror^{2.5}}}}}$}
\end{align*}
Therefore, with high probability we have,
\begin{align}\label{ineq: upper bound}
    \frac{\size{\sampledvertices_i}}{\size{\sampledvertices}} \leq \fbrac{1 + \frac{\approxerror}{6}} \frac{\size{\bucket_i}}{\vertexcount}
\end{align}
Now, summing the inequality~\eqref{ineq: upper bound} for heavy buckets $\bucket_i$ where $i \in \densebucketset$, using the union bound we get with high probability,
\begin{align}
    \heavydegapprox = \frac{\vertexcount}{\size{\sampledvertices}} \sum_{i \in \densebucketset} \size{\sampledvertices_i} (1 + \preapprox)^i &\leq  \sum_{i \in \densebucketset} \fbrac{1 + \frac{\approxerror}{6}} \size{\bucket_i} (1 + \preapprox)^{i}
    \leq \fbrac{1 + \frac{\approxerror}{6}} (1 + \preapprox) \heavydeg
\end{align}
Taking, $\preapprox = \approxerror/10$, we get $\heavydegapprox \leq \fbrac{1 + \dfrac{\approxerror}{3}}\heavydeg$.

\paragraph{Lower Estimation:} Note that by applying the lower tail estimate of the multiplicative Chernoff bound (see \cite[Theorem~4.5]{Mitzenmacher_Upfal_2005}) in a similar way as above, for every $i$ with $\size{\bucket_i} \geq \vertexdensity\vertexcount$, then with high probability we get
\begin{align}\label{ineq: lower bound}
    \frac{\size{\sampledvertices_i}}{\size{\sampledvertices}} \geq \fbrac{1 - \frac{\approxerror}{3}} \frac{\size{\bucket_i}}{\vertexcount}
\end{align}
Therefore, summing the inequality~\eqref{ineq: lower bound} over the heavy buckets, using the union bound, we get, with high probability,
\begin{align}
    \heavydegapprox = \frac{\vertexcount}{\vertexsamplesize} \sum_{i \in \densebucketset} \size{\sampledvertices_i} (1 + \preapprox)^i &\geq  \sum_{i \in \densebucketset} \fbrac{1 - \frac{\approxerror}{3}} \size{\bucket_i} (1 + \preapprox)^{i}
    \geq \fbrac{1 - \frac{\approxerror}{3}}  \heavydeg
\end{align}
This completes the proof.
\end{proof}

Recall that we have defined $\heavyprob = \frac{\heavydeg}{2 \edgecount}$. We intend to get a good estimate of $\heavyprob$ using $\randomvertices$. For that purpose, we require a lower bound on the $\heavyprob$, which we establish in the following lemma.

\begin{lemma} \label{lemma: probability of edges}
    With high probability, we have $\heavyprob  \geq \frac{1}{2} - \frac{\approxerror}{8}$.
\end{lemma}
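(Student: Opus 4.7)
The plan is to lower-bound $\heavyprob = \heavydeg/(2\edgecount)$ by controlling the degree mass lying in the light buckets. Writing $\lightdeg = 2|\edgeset(\lightbucket)| + |\edgeset(\heavybucket,\lightbucket)|$ and using the identity $\edgecount = |\edgeset(\heavybucket)| + |\edgeset(\lightbucket)| + |\edgeset(\heavybucket,\lightbucket)|$, from which $|\edgeset(\heavybucket,\lightbucket)| \leq \edgecount - |\edgeset(\lightbucket)|$, one obtains $\lightdeg \leq \edgecount + |\edgeset(\lightbucket)|$ and hence
\[
\heavyprob \;=\; \frac{\heavydeg}{2\edgecount} \;=\; 1 - \frac{\lightdeg}{2\edgecount} \;\geq\; \frac{1}{2} \;-\; \frac{|\edgeset(\lightbucket)|}{2\edgecount}.
\]
So it suffices to show $|\edgeset(\lightbucket)| \leq \approxerror\edgecount/4$ with high probability, which I will establish through the purely combinatorial bound $|\edgeset(\lightbucket)| \leq \binom{|\vertexset_\lightbucket|}{2}$, where $\vertexset_\lightbucket := \bigcup_{i \notin \densebucketset} \bucket_i$ is the set of vertices in light buckets.

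To bound $|\vertexset_\lightbucket|$, I re-use the Chernoff analysis in the proof of Lemma~\ref{lemma: approx heavy vertices}, but run in the opposite direction. By the multiplicative Chernoff lower tail, whenever $|\bucket_i|/\vertexcount$ exceeds the algorithm's threshold $\tfrac{1}{\bucketcount}\sqrt{\approxerror/(6\vertexcount)}$ by more than a factor of $1/(1-\Theta(\approxerror))$, w.h.p.\ the empirical fraction $|\sampledvertices_i|/|\sampledvertices|$ still clears the threshold, forcing $i \in \densebucketset$. Contrapositively, with high probability every $i \notin \densebucketset$ satisfies
\[
|\bucket_i| \;\leq\; \bigl(1+\Theta(\approxerror)\bigr)\,\frac{\vertexcount}{\bucketcount}\sqrt{\frac{\approxerror}{6\vertexcount}} \;=\; O\!\left(\frac{\sqrt{\approxerror\vertexcount}}{\bucketcount}\right).
\]
A union bound over the $\bucketcount = O(\log\vertexcount)$ buckets then yields $|\vertexset_\lightbucket| = \sum_{i \notin \densebucketset} |\bucket_i| = O(\sqrt{\approxerror\vertexcount})$ with high probability.

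Substituting into the combinatorial bound gives $|\edgeset(\lightbucket)| \leq |\vertexset_\lightbucket|^2/2 = O(\approxerror\vertexcount)$, and since $\edgecount \geq \vertexcount/2$ in this regime this is $O(\approxerror\edgecount)$. Tracking the constants, the $\sqrt{\approxerror/6}$ in Line~\ref{alg: non-adap line 3} together with a Chernoff slack of order $\approxerror$ drives the right-hand side below $\approxerror\edgecount/4$, giving $\heavyprob \geq 1/2 - \approxerror/8$. The main obstacle is precisely this constant bookkeeping: the same sample $\sampledvertices$ drives both the upper-tail application already performed in Lemma~\ref{lemma: approx heavy vertices} and the new lower-tail application here, so the multiplicative slacks in the two directions must be simultaneously compatible with the algorithm's single threshold $\tfrac{1}{\bucketcount}\sqrt{\approxerror/(6\vertexcount)}$. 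Once this alignment is verified, the combinatorial bound $|\edgeset(\lightbucket)| \leq |\vertexset_\lightbucket|^2/2$ and the identity $\lightdeg \leq \edgecount + |\edgeset(\lightbucket)|$ finish the argument.
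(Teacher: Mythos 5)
Your proposal is correct and follows essentially the same route as the paper: both arguments reduce to the inequality $\heavyprob \geq \tfrac12 - |\edgeset(\lightbucket)|/(2\edgecount)$ (your detour via $\lightdeg$ and the paper's direct bound $\heavydeg \geq \edgecount - |\edgeset(\lightbucket)|$ are algebraically identical since $\heavydeg + \lightdeg = 2\edgecount$), then bound $|\vertexset_\lightbucket| = O(\sqrt{\approxerror\vertexcount})$ via the contrapositive Chernoff argument and a union bound over buckets, and finally use $|\edgeset(\lightbucket)| \leq \binom{|\vertexset_\lightbucket|}{2}$ together with $\edgecount \geq \vertexcount/2$. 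The constant bookkeeping you flag as the remaining obstacle is real but is handled in the paper exactly as you outline (choosing the Chernoff comparison threshold a constant factor above the algorithm's threshold $\tfrac{1}{\bucketcount}\sqrt{\approxerror/(6\vertexcount)}$), so no idea is missing.
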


\begin{proof}
    % Let $U_1, U_2 \subseteq \vertexset$ be disjoint subsets of vertices. We denote by $\edgeset(U_1, U_2)$ the set of edges with one endpoint in $U_1$ and the other in $U_2$, and by $\edgeset(U_1)$ the set of edges with both endpoints in $U_1$.

    % The complement of $\heavybucket$ can be expressed as $\vertexset\setminus\heavybucket = \{\vertex \in \bucket_i \mid i \notin \densebucketset\}$. 

    Using the upper tail estimate of the Chernoff bound (see~\cite[Theorem~4.4]{Mitzenmacher_Upfal_2005}) for every $i$ such that $\frac{\size{\bucket_i}}{\vertexcount} \geq \frac{1}{\bucketcount} \sqrt{\frac{\approxerror }{2\vertexcount}}$ with high probability,
    \[
    \frac{\size{\sampledvertices_i}}{\size{\sampledvertices}} \geq \frac{1}{\bucketcount} \sqrt{\frac{\approxerror }{6\vertexcount}  }.
    \]
     This ensures that, if $i \notin \densebucketset$ with high probability we have $\size{\bucket_i} \leq \frac{1}{\bucketcount} \sqrt{\frac{\approxerror \vertexcount}{2} }$. Now, we can upper bound the number of vertices in $\lightbucket$ using the union bound. Therefore, with high probability,
    \begin{equation} \label{ineq: light vertex upper bound}
        |\lightbucket| \leq \bucketcount \cdot \frac{1}{\bucketcount}\sqrt{\frac{\approxerror \vertexcount}{2}} = \sqrt{\frac{\approxerror \vertexcount}{2}}
    \end{equation}
    Recall that $|\edgeset(\heavybucket)|$, $|\edgeset(\lightbucket)|$, and $|\edgeset(\heavybucket,\lightbucket)|$ denote the number of edges whose both endpoints lie in $\heavybucket$, both endpoints lie in $\lightbucket$, and one endpoint each lie in $\heavybucket$ and $\lightbucket$, respectively. Then, we have $\edgecount = |\edgeset(\heavybucket)| + |\edgeset(\heavybucket, \lightbucket)| + |\edgeset(\lightbucket)|$, and $\heavydeg = 2|\edgeset(\heavybucket)| + |\edgeset(\heavybucket, \lightbucket)|$. Then, we have:
    \begin{align*}
        \heavydeg &= 2|\edgeset(\heavybucket)| + |\edgeset(\heavybucket, \lightbucket)|
        \geq |\edgeset(\heavybucket)| + |\edgeset(\heavybucket, \lightbucket)|
        \geq \edgecount - \frac{\approxerror\vertexcount}{2} \geq \fbrac{1 - \frac{\approxerror}{4}}\edgecount
    \end{align*}
    Where the last inequality uses the fact that $\edgecount \geq \vertexcount/2$. By the definition of $\heavyprob$, dividing both sides by $2\edgecount$ completes the proof. 
\end{proof}

In Lemma~\ref{lemma: probability of edges}, we established that $\heavyprob \geq \frac{1}{2} - \frac{\approxerror}{8}$. This lower bound ensures that the contribution of edges to heavy buckets is sufficiently large. As a result, we can obtain a good multiplicative approximation of $\heavyprob$ using random edge samples. We formalise this in the following lemma.
\begin{lemma} In line \ref{alg: non-adap line 7} of \nonadapalgo{}, we have $\heavyprobapprox = \fbrac{1 \pm \frac{\approxerror}{3}} \heavyprob$ with high probability.
\end{lemma}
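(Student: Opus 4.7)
I will show $\heavyprobapprox = (1 \pm \approxerror/3)\heavyprob$ with high probability via a two-stage multiplicative Chernoff argument: first over the edge-sample $\randomvertices$ (conditional on $\sampledvertices$), then over the uniform sample $\sampledvertices$. By Lemma~\ref{lemma: approx heavy vertices} and Lemma~\ref{lemma: probability of edges}, the algorithm's heavy-bucket indices $\densebucketset$ coincide (with high probability) with the deterministic set $\{i : |\bucket_i| \geq \vertexdensity\vertexcount\}$; let $H^*$ denote the corresponding vertex set, and condition on this event so that $\heavyprob = \deg(H^*)/(2\edgecount)$.

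\textbf{Stage 1 (over $\randomvertices$).} Expanding the multiset intersection, $\sum_{i \in \densebucketset} |\randomvertices_i| = \sum_{u \in \randomvertices} Z_u$, where $Z_u := \sum_{w \in \sampledvertices} \mathbb{1}[u = w,\, w \in H^*]$. Conditional on $\sampledvertices$, the $Z_u$'s are i.i.d.\ with conditional mean $p_\sampledvertices := \sum_{w \in \sampledvertices} \mathbb{1}[w \in H^*]\deg(w)/(2\edgecount)$, and are bounded by $M := \max_v \mathrm{mult}_\sampledvertices(v) = O(\log\vertexcount)$ (w.h.p.\ over $\sampledvertices$, by standard bounds on max multiplicity of with-replacement sampling). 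A multiplicative Chernoff bound then yields $\sum_u Z_u = (1 \pm \approxerror/10)\,|\randomvertices|\,p_\sampledvertices$ with high probability over $\randomvertices$, since $\E[p_\sampledvertices] = |\sampledvertices|\heavyprob/\vertexcount$ and the chosen sample sizes give $|\randomvertices|\cdot|\sampledvertices|/\vertexcount = \widetilde{\Omega}(1/\approxerror^2)$.

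\textbf{Stage 2 (over $\sampledvertices$).} Next, $p_\sampledvertices = \sum_{w \in \sampledvertices} X_w$ is a sum of $|\sampledvertices|$ i.i.d.\ bounded variables with $X_w := \mathbb{1}[w \in H^*]\deg(w)/(2\edgecount)$. The crucial structural ingredient is the bound $C := \max_{v \in H^*}\deg(v)/(2\edgecount) = O(1/(\vertexdensity\vertexcount))$: every $v \in H^*$ lies in some bucket $\bucket_i$ with $|\bucket_i|\geq\vertexdensity\vertexcount$, and the identity $|\bucket_i|(1+\preapprox)^{i-1}\leq 2\edgecount$ forces $\deg(v)\leq (1+\preapprox)\cdot 2\edgecount/(\vertexdensity\vertexcount)$. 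With $\vertexdensity = \widetilde{\Theta}(\sqrt{\approxerror/\vertexcount})$, this gives $C = \widetilde{O}(1/\sqrt{\approxerror\vertexcount})$. A second multiplicative Chernoff bound then yields $p_\sampledvertices = (1 \pm \approxerror/10)\,|\sampledvertices|\heavyprob/\vertexcount$ with high probability over $\sampledvertices$.

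Combining both stages by a union bound yields, with high probability,
\[
\heavyprobapprox \;=\; \frac{\vertexcount}{|\sampledvertices|\cdot|\randomvertices|}\sum_{i\in\densebucketset}|\randomvertices_i| \;=\; (1 \pm \approxerror/3)\,\heavyprob,
\]
as claimed. I expect the main obstacle to be the structural bound in Stage 2: without the degree cap $C = \widetilde{O}(1/\sqrt{\approxerror\vertexcount})$, a single vertex of degree $\Theta(\edgecount)$ in $H^*$ could contribute a constant-sized term to $p_\sampledvertices$, inflating its variance beyond what the prescribed $|\sampledvertices| = \widetilde{O}(\sqrt{\vertexcount}/\approxerror^{2.5})$ can absorb. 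The bucket-size lower bound $\vertexdensity\vertexcount$ is precisely what prevents this pathology and makes the two-stage Chernoff argument go through.
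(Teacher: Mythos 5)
Your proof is correct in substance and, importantly, more rigorous than the paper's own argument. The paper applies a single Chernoff bound directly to $\sum_{i\in\densebucketset}|\randomvertices_i|$ ``over the $|\randomvertices|$ random edge samples'' and equates $\E[\heavyprobapprox]$ with $\heavyprob$ without qualification. But conditional on $\sampledvertices$, the estimator's mean is $\vertexcount\,p_{\sampledvertices}/|\sampledvertices|$ with $p_{\sampledvertices}=\sum_{w\in\sampledvertices}\mathbf{1}[w\in H^*]\deg(w)/(2\edgecount)$, not $\heavyprob$; a second concentration step over $\sampledvertices$ (exactly your Stage~2) is needed, and the paper omits it. That second stage is also precisely where the structural degree cap $\max_{v\in H^*}\deg(v)/(2\edgecount)=\widetilde{O}(1/\sqrt{\approxerror\vertexcount})$, derived from the bucket-size lower bound $\vertexdensity\vertexcount$, becomes essential; you are right to single it out, since the paper never makes it explicit. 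Your multiplicity bound $M=O(\log\vertexcount)$ likewise patches the fact that the terms of $\sum_{i\in\densebucketset}|\randomvertices_i|$ need not be $\{0,1\}$-valued when $\sampledvertices$ has repeats, another detail the paper glosses over.

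One imprecision remains in your Stage~2 setup. Lemma~\ref{lemma: approx heavy vertices}'s proof gives only the one-sided guarantee that every $i\in\densebucketset$ satisfies $|\bucket_i|\geq\vertexdensity\vertexcount$; it does not establish $\densebucketset=\{i:|\bucket_i|\geq\vertexdensity\vertexcount\}$. Buckets with sizes in the gray zone between $\vertexdensity\vertexcount$ and $(\vertexcount/\bucketcount)\sqrt{\approxerror/(2\vertexcount)}$ may or may not land in $\densebucketset$ depending on $\sampledvertices$. Conditioning on a fixed realization of $H^*$ therefore couples the distribution of $\sampledvertices$ to $\densebucketset$, which Stage~2's Chernoff bound (taken over fresh randomness of $\sampledvertices$) does not formally account for. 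The clean repair is a per-bucket version of Stage~2: for each fixed $i$ with $|\bucket_i|\geq\vertexdensity\vertexcount$, concentrate $\sum_{w\in\sampledvertices\cap\bucket_i}\deg(w)/(2\edgecount)$ around its mean $|\sampledvertices|\deg(\bucket_i)/(2\edgecount\vertexcount)$ using the same per-bucket degree cap, take a union bound over all such $i$, and then sum over whichever of them end up in $\densebucketset$. This mirrors the per-bucket structure already used in Lemma~\ref{lemma: approx heavy vertices} and removes the circularity without affecting your query-complexity accounting.
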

\begin{proof}
    For a vertex $\vertex$, if $\vertex \in \randomvertices$, then $\vertex$ is an endpoint vertex of a sampled random edge selected with probability $1/2$. Therefore,
    \[
    \Pr\tbrac{\vertex \in \randomvertices} = \frac{\deg(\vertex)}{2\edgecount}
    \]
    Given $\randomvertices_i = \randomvertices \cap \sampledvertices_i$, then the expected size of $\randomvertices_i$ is,
    \begin{align}
        \E \tbrac{\size{\randomvertices_i}} = \size{\randomvertices} \cdot \frac{\sum_{v \in \bucket_i} \deg(v) }{2\edgecount} \cdot \frac{\size{\sampledvertices}}{\vertexcount}
        = \size{\randomvertices} \cdot \frac{\deg\fbrac{\bucket_i}}{2\edgecount} \cdot \frac{\size{\sampledvertices}}{\vertexcount}
    \end{align}
Using the linearity of expectation, we get 
\begin{align}\label{eq: expected random edges}
    \E \tbrac{\sum_{i \in \densebucketset} \size{\randomvertices_i}} = \size{\randomvertices} \cdot \frac{\deg\fbrac{\heavybucket}}{2\edgecount} \cdot \frac{\size{\sampledvertices}}{\vertexcount}
    = \size{\randomvertices} \cdot \heavyprob \cdot \frac{\size{\sampledvertices}}{\vertexcount}
\end{align}
In Line~\ref{alg: non-adap line 7} of \nonadapalgo{}, 
\begin{align*}
    \heavyprobapprox = \frac{\vertexcount}{\size{\sampledvertices} } \cdot \frac{1}{\size{\randomvertices}} \sum_{i \in \densebucketset} \size{\randomvertices_i} 
\end{align*}
Therefore, the expectation of $\heavyprobapprox$ is
\begin{align*}
    \E[\heavyprobapprox] = \frac{\vertexcount}{\size{\sampledvertices} } \cdot \frac{1}{\size{\randomvertices}} \E\tbrac{\sum_{i \in \densebucketset} \size{\randomvertices_i}} = \heavyprob
\end{align*}
Given $\heavyprob \geq \frac{1}{2} - \frac{\approxerror}{8}$ (Lemma~\ref{lemma: probability of edges}), by applying the two-sided Chernoff bound (see \cite[Corollary~4.6]{Mitzenmacher_Upfal_2005}) over $\size{\randomvertices}$ random edge samples ensures that with high probability
\begin{align*}
    \Pr\tbrac{\abs{\heavyprobapprox - \E[\heavyprobapprox]} > \frac{\approxerror}{3}\E[\heavyprobapprox]} &\leq 2\exp{\fbrac{-\frac{\approxerror^2\cdot\E\tbrac{\sum_{i \in \densebucketset} \size{\randomvertices_i}}}{27}}}\\
    &= 2 \exp{\fbrac{-\frac{\approxerror^2 \cdot \size{\randomvertices}\cdot \size{\sampledvertices} \cdot \heavyprob }{27\vertexcount}}} &\text{(From Equation~\eqref{eq: expected random edges})}\\
    &\leq 2 \exp{\fbrac{-\frac{\approxerror^2 \cdot \size{\randomvertices}\cdot \size{\sampledvertices} \cdot 0.4 }{27\vertexcount}}} &\text{$\fbrac{\heavyprob \geq 0.4 \text{ for } \approxerror \leq 0.8 }$}\\
    &\leq \frac{2}{\vertexcount^c} &\text{$\fbrac{\size{\randomvertices} = \widetilde{O}\fbrac{\frac{\vertexcount}{\approxerror^2\size{\sampledvertices}}} = \widetilde{O}\fbrac{\sqrt{\approxerror\vertexcount}}}$}
\end{align*}
So, with high probability, we have,
\begin{align}
    \heavyprobapprox =  \fbrac{1 \pm \frac{\approxerror}{3}} \heavyprob
\end{align}
\end{proof}
% \begin{theorem}
%     With probability at least $2/3$, we have
%     \[
%     \frac{\heavydegapprox}{2 \heavyprobapprox} = \fbrac{1 \pm \approxerror} \cdot \edgecount
%     \]
% \end{theorem}

Now, from Lemmas~\ref{lemma: approx heavy vertices} and~\ref{lemma: probability of edges}, we have multiplicative approximations to both $\heavydeg$ and $\heavyprob$. Recall that the true edge count satisfies
\[
\edgecount = \frac{\heavydeg}{2\heavyprob}.
\]
Using the approximations to $\heavydeg$ and $\heavyprob$, we can construct an estimator for $\edgecount$ that is accurate up to a $(1 \pm \approxerror)$ multiplicative factor. We formalise this result in the proof for Theorem~\ref{theorem: non_adaptive when m>cn}.

\begin{proof}[{Proof of Theorem~\ref{theorem: non_adaptive when m>cn}}]
    The estimator for the number of edges is given by
    \[
    \hat{m} := \frac{\heavydegapprox}{2\heavyprobapprox}
    \]
    From Lemmas~\ref{lemma: approx heavy vertices} and~\ref{lemma: probability of edges}, we have
    $\heavydegapprox = \fbrac{1 \pm \dfrac{\approxerror}{3}} \heavydeg$ and $\heavyprobapprox = \fbrac{1 \pm \dfrac{\approxerror}{3}} \heavyprob$.
    Applying the union bound, these approximations hold simultaneously with high probability. Thus,
    \[
    \frac{ \fbrac{1 - \frac{\approxerror}{3}} }{ \fbrac{1 + \frac{\approxerror}{3}} } \cdot \frac{\heavydeg}{2\heavyprob}
    \leq \frac{\heavydegapprox}{2\heavyprobapprox}
    \leq \frac{ \fbrac{1 + \frac{\approxerror}{3}} }{ \fbrac{1 - \frac{\approxerror}{3}} } \cdot \frac{\heavydeg}{2\heavyprob}
    \]
    Using the inequalities $1 - \approxerror \leq \dfrac{1 - \frac{\approxerror}{3}}{1 + \frac{\approxerror}{3}}$ and $\dfrac{1 + \frac{\approxerror}{3}}{1 - \frac{\approxerror}{3}} \leq 1 + \approxerror$, for any $\approxerror \in (0,1) $, we obtain,
    % \[
    % 1 - \approxerror \leq \frac{1 - \frac{\approxerror}{3}}{1 + \frac{\approxerror}{3}}
    % \quad \text{and} \quad
    % \frac{1 + \frac{\approxerror}{3}}{1 - \frac{\approxerror}{3}} \leq 1 + \approxerror
    % \quad \text{, for any } \approxerror \in (0,1),
    % \]
    % we conclude
    \[
    (1 - \approxerror) \cdot \edgecount
    \leq \frac{\heavydegapprox}{2\heavyprobapprox}
    \leq (1 + \approxerror) \cdot \edgecount
    \]
    Therefore, with high probability, \nonadapalgo{} gives an \((1 \pm \approxerror)\)-factor approximate of \(\edgecount\).
\end{proof}

\subsection{Complete algorithm}\label{subsec: Complete Algo}

In the previous section, we presented a non-adaptive algorithm \nonadapalgo{} under the assumption that $\edgecount \geq \vertexcount/2$. We now describe how to remove this assumption using a simple collision-based test inspired by the birthday paradox.

Suppose we independently sample $\resamplesize = \Theta(\sqrt{\vertexcount})$ edges uniformly at random. Let $\collisioncount$ be the number of collisions among these samples. If $\edgecount \leq \vertexcount$, then with probability, a collision occurs, and the number of collisions is used to estimate $\edgecount$. If $\edgecount \geq 8\vertexcount$, then we proceed to run \nonadapalgo{}, which works in this regime.

\begin{algorithm}
\caption{\iscollision{}}
\label{alg: collision}
\begin{algorithmic}[1]
\Require \randedgeq{} access to the graph $\graph$
\For {$i = 1$ to $O(\log \vertexcount)$} \label{alg: collision line 1}
    \State  Uniformly and independently sample $\preresamplesize = \sqrt{2\vertexcount} $ random edges. \label{alg: collision line 2}
    \State If there is a collision, then $\collision_i = 1$, else $\collision_i = 0$. \label{alg: collision line 3}
    % \If {there is a collision}
    %     \State $\collision_i = 1$
    % \Else
    %     \State $\collision_i = 0$
    % \EndIf
\EndFor
\State If more than half of $\collision_i$ are $1$, then output $1$, else output $0$.
\end{algorithmic}
\end{algorithm}

\begin{lemma}\label{lemma: collision detection}
    With high probability, the following statements hold:
    \begin{enumerate}
    \item If $\edgecount \leq  \vertexcount$, then $\iscollision{}$ outputs $1$.
    \item If $\edgecount \geq 8\vertexcount$, then $\iscollision{}$ outputs $0$.
\end{enumerate}
\end{lemma}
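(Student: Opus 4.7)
The plan is to analyze a single iteration of the loop in \iscollision{}, showing that when $\edgecount \leq \vertexcount$ a collision occurs with probability bounded away from $1/2$ above, and when $\edgecount \geq 8\vertexcount$ it occurs with probability bounded away from $1/2$ below. A Chernoff bound over the $O(\log \vertexcount)$ independent iterations then amplifies both statements to high probability.

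First I would fix one iteration with $\preresamplesize = \sqrt{2\vertexcount}$ independent uniform edge samples from $\edgeset$. For the easier direction, suppose $\edgecount \geq 8\vertexcount$. A union bound over the $\binom{\preresamplesize}{2}$ unordered pairs of samples, each of which coincides with probability exactly $1/\edgecount$, yields
\[
\Pr[\collision_i = 1] \;\leq\; \binom{\preresamplesize}{2}\frac{1}{\edgecount} \;\leq\; \frac{\preresamplesize^2}{2\edgecount} \;=\; \frac{\vertexcount}{\edgecount} \;\leq\; \frac{1}{8}.
\]
For the other direction, suppose $\edgecount \leq \vertexcount$. The probability that all $\preresamplesize$ samples are distinct is
\[
\prod_{i=0}^{\preresamplesize-1}\!\Bigl(1-\frac{i}{\edgecount}\Bigr) \;\leq\; \exp\!\Bigl(-\sum_{i=0}^{\preresamplesize-1}\frac{i}{\edgecount}\Bigr) \;=\; \exp\!\Bigl(-\frac{\preresamplesize(\preresamplesize-1)}{2\edgecount}\Bigr).
\]
Using $\preresamplesize = \sqrt{2\vertexcount}$ and $\edgecount \leq \vertexcount$, the exponent is at most $-(1 - o(1))$, so $\Pr[\collision_i = 1] \geq 1 - e^{-1+o(1)} \geq 0.6$ for all sufficiently large $\vertexcount$. (If $\edgecount$ is very small so that $\preresamplesize > \edgecount$, a collision is certain, so the bound is trivial.)

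Having established, in each iteration, single-trial collision probabilities of at least $0.6$ in the first regime and at most $0.125$ in the second, I would run $r = \Theta(\log \vertexcount)$ independent iterations and let $K = \sum_{i=1}^{r}\collision_i$. Since the $\collision_i$ are independent Bernoulli indicators, a standard Chernoff bound (e.g.\ \cite[Theorem 4.4 and 4.5]{Mitzenmacher_Upfal_2005}) gives that $K$ deviates from its mean by more than an additive $r/10$ with probability at most $\vertexcount^{-c}$ for any desired constant $c$, by choosing the hidden constant in $r$ large enough. In regime (1), $\E[K] \geq 0.6\, r$, so with high probability $K > r/2$ and \iscollision{} outputs $1$; in regime (2), $\E[K] \leq 0.125\, r$, so with high probability $K < r/2$ and \iscollision{} outputs $0$.

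The main obstacle is just the single-trial lower bound on the collision probability when $\edgecount \leq \vertexcount$: one has to be slightly careful, because the standard birthday-paradox estimate $1 - e^{-\preresamplesize^2/(2\edgecount)}$ is only an upper bound on $\Pr[\text{no collision}]$ via $1-x \leq e^{-x}$, which is precisely the direction we need. All other steps reduce to routine union-bound and Chernoff calculations. The constant $8$ in the second regime is exactly what gives the $1/8$ gap, which is comfortably below the $1/2$ threshold used by \iscollision{}, so the two-sided amplification goes through cleanly.
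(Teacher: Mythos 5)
Your proof is correct and follows the same overall structure as the paper's: establish a constant gap in single-trial collision probabilities on either side of the $1/2$ threshold, then amplify over $O(\log n)$ independent iterations via Chernoff. For the sparse regime $m\le n$ your argument is identical (the estimate $\prod(1-j/m)\le \exp(-\preresamplesize(\preresamplesize-1)/(2m))$ via $1-x\le e^{-x}$). The one place you diverge is the dense regime $m\ge 8n$: you bound $\Pr[\collision_i=1]\le \binom{\preresamplesize}{2}/m\le 1/8$ by a union bound over pairs, whereas the paper bounds $\Pr[\collision_i=0]$ from below using the two-sided inequality $1-x\ge e^{-x-x^2}$ (for $x\le 1/2$) and then controls the residual $\Theta(1/\sqrt{n})$ term. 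Your union-bound route is cleaner and requires no Taylor-style estimates or hidden $o(1)$ corrections; it also gives a stronger single-trial bound ($\Pr[\collision_i=0]\ge 7/8$ rather than $\ge 1/2$), though both comfortably suffice once the Chernoff amplification kicks in.
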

\begin{proof}
    For the proof, we use the birthday paradox techniques \cite[Chapter 5]{Mitzenmacher_Upfal_2005}. Consider the $i$-th iteration (Line~\ref{alg: collision line 1} of \iscollision{}). The probability of no collision in the $i$-th iteration is given by,
    \begin{align*}
        \Pr[ \collision_i = 0 ] = \prod_{ j = 0 }^{ \preresamplesize - 1 } \left( 1 - \frac{ j }{ \edgecount } \right)
    \end{align*}

    If $\edgecount \leq  \vertexcount$, then 
    \begin{align*}
        \Pr[ \collision_i = 0 ] \leq \exp \left( - \sum_{ j = 0 }^{ \preresamplesize - 1 } \frac{j}{ \edgecount } \right) = \exp \left( - \frac{ \preresamplesize ( \preresamplesize - 1 )}{2\edgecount} \right) \leq \exp\left(-\frac{ \preresamplesize ( \preresamplesize - 1 )}{ 2 \vertexcount } \right)
    \end{align*}
    Here, we have used $ 1 - x \leq e^{-x}$. So, for large $ \vertexcount $ and $ \preresamplesize = \sqrt{ 2 \vertexcount } $, we have
    \begin{align*}
        \Pr[ \collision_i = 0 ] \leq \frac{ 1 }{ e }
    \end{align*}

    If $ \edgecount \geq 8 \vertexcount $, then 
    \begin{align*}
        \Pr[ \collision_i = 0 ] \geq \exp \fbrac{ - \sum_{ j = 0 }^{ \preresamplesize - 1 } \fbrac{ \frac{ j }{ \edgecount } + \frac{ j^2 }{ \edgecount^2 } } } \geq \exp \left( - \frac{ \preresamplesize ( \preresamplesize - 1 )}{ 16 \vertexcount } + \Theta \fbrac{ \frac{ 1 }{ \sqrt{ \vertexcount } } }\right)
    \end{align*}
    Here, we have used $ 1 - x \geq e^{ -x -x^2 }, \; x \leq 0.5 $. So, for large $ \vertexcount $ and $ \preresamplesize = \sqrt{ 2 \vertexcount } $, we have
    \begin{align*}
        \Pr[ \collision_i = 0 ] \geq \frac{ 1 }{ 2 }
    \end{align*}

    Now, when $ \edgecount \leq \vertexcount $, $ \Pr[ \collision_i = 0 ] \leq \frac{ 1 }{ e } $. \iscollision{} outputs $0$ when more than half of $ \collision_i = 0 $. Over $ O( \log \vertexcount ) $ parallel runs, using the upper tail estimate of the Chernoff Bound (see \cite[Theorem~4.4]{Mitzenmacher_Upfal_2005}), the probability that \iscollision{} outputs $0$ is at most $ \exp ( - \log \vertexcount) = \frac{1}{\vertexcount}$. Similarly, $ \edgecount \geq 8\vertexcount $, the probability that \iscollision{} outputs $1$ is at most $ \frac{1}{\vertexcount}$.
\end{proof}

\begin{algorithm}
\caption{\fullnonadapalgo{}}
\label{alg: fully-non-adap}
\begin{algorithmic}[1]
\Require \degreeq{} and \randedgeq{} access to the graph $\graph$

\State Run \iscollision{}. Let the output be $\collision$.\label{alg2: Line 1}

\State  Uniformly and independently sample $ \resamplesize = \Theta \fbrac{ \frac{ \sqrt { \vertexcount } \log \vertexcount }{ \approxerror } }  $ random edges. Let the number of collisions be $ \collisioncount $.\label{alg2: Line 2}

\State Run \nonadapalgo{}. Let the output be $\noncollisionedge$.\label{alg2: Line 3}

\State Output:
\[
\widetilde{m} = 
\begin{cases}
\collisionedge = \binom{\resamplesize}{2}/\collisioncount, & \text{if $\collisioncount > 0$ and $ \collision = 1 $} \\
\noncollisionedge, & \text{Otherwise }
\end{cases}
\]

\end{algorithmic}
\end{algorithm}

Note that we make all the queries regarding our algorithm at the start and then make decisions based on the queries. While the output depends on the outcome of these intermediate steps, all the queries are made a priori.

\begin{lemma}
\label{lemma: complete-non-adap}
With high probability, the following statements hold:
\begin{enumerate}
    \item If $\edgecount \leq  \vertexcount$, then \fullnonadapalgo{} outputs
    \[
    \collisionedge \in (1 \pm \approxerror) \cdot \edgecount.
    \]
    \item If $\edgecount \geq 8\vertexcount$, then \fullnonadapalgo{} outputs.
   % \item If $\collisioncount = 0$, then $\edgecount \geq \frac{\vertexcount}{2}$, and the output $\widetilde{m}_\text{heavy}$ of \nonadapalgo{} satisfies:
    \[
    \noncollisionedge \in (1 \pm \approxerror) \cdot \edgecount
    \].
\end{enumerate}
\end{lemma}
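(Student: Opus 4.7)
The plan is to split on which branch \fullnonadapalgo{} actually takes in each regime, leveraging Lemma~\ref{lemma: collision detection} so that with high probability the value of $\collision$ agrees with the ``right'' branch for the true $\edgecount$. In the sparse regime $\edgecount \le \vertexcount$, Lemma~\ref{lemma: collision detection} yields $\collision = 1$ w.h.p., and I will argue that $\collisioncount > 0$ w.h.p., so the algorithm returns $\collisionedge = \binom{\resamplesize}{2}/\collisioncount$. In the dense regime $\edgecount \ge 8\vertexcount$, Lemma~\ref{lemma: collision detection} gives $\collision = 0$ w.h.p., so the output is $\noncollisionedge$; since $8\vertexcount \ge \vertexcount/2$, Theorem~\ref{theorem: non_adaptive when m>cn} applies and finishes that case immediately.

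The substantive work is the analysis of $\collisionedge$ for $\edgecount \le \vertexcount$. Write $\collisioncount = \sum_{1\le i<j\le \resamplesize} X_{ij}$ where $X_{ij} = \mathbf{1}[e_i = e_j]$ and $e_1,\ldots,e_{\resamplesize}$ are i.i.d.\ uniform over $\edgeset$. Then $\Pr[X_{ij}=1] = 1/\edgecount$, so
\[
\E[\collisioncount] = \binom{\resamplesize}{2}\,\frac{1}{\edgecount}.
\]
The key variance computation is that pairwise covariances vanish: for $\{i,j\}\cap\{k,l\} = \emptyset$, $X_{ij}$ and $X_{kl}$ are independent, and for pairs sharing exactly one index (say $\{i,j\}$ and $\{j,k\}$),
\[
\E[X_{ij}X_{jk}] = \Pr[e_i = e_j = e_k] = 1/\edgecount^2 = \E[X_{ij}]\,\E[X_{jk}],
\]
again giving zero covariance by independence of $e_i,e_j,e_k$. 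Hence $\Var(\collisioncount) \le \binom{\resamplesize}{2}/\edgecount = \E[\collisioncount]$. Plugging in $\resamplesize = \Theta(\sqrt{\vertexcount}\log \vertexcount/\approxerror)$ and $\edgecount \le \vertexcount$, we get $\E[\collisioncount] = \Omega(\log^2 \vertexcount / \approxerror^2)$, and thus $\Var(\collisioncount)/\E[\collisioncount]^2 = O(\approxerror^2/\log^2 \vertexcount)$.

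Applying Chebyshev's inequality gives $\collisioncount \in (1 \pm \approxerror/2)\E[\collisioncount]$ with probability $1 - O(1/\log^2 \vertexcount)$, and in particular $\collisioncount > 0$. Inverting, $\collisionedge = \binom{\resamplesize}{2}/\collisioncount \in (1\pm\approxerror)\edgecount$ after absorbing constants. To upgrade this to \emph{high} probability (i.e.\ $1 - \vertexcount^{-c}$, which is what ``with high probability'' means in the statement), the main obstacle I anticipate is that plain Chebyshev only yields $1 - O(1/\log^2\vertexcount)$; the standard fix is a median-of-$\Theta(\log \vertexcount)$ amplification performed on independent edge blocks of the $\resamplesize$ samples, which raises the query count by at most another $\log\vertexcount$ factor absorbed into $\widetilde{O}(\cdot)$. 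Combining this with the direct application of Lemma~\ref{lemma: collision detection} and Theorem~\ref{theorem: non_adaptive when m>cn} in the two regimes, and taking a union bound over the three high-probability events (the outcome of \iscollision{}, the concentration of $\collisioncount$ in the sparse case, and the guarantee of \nonadapalgo{} in the dense case), completes the proof.
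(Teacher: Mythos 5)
Your proposal follows essentially the same route as the paper: split on the regime via Lemma~\ref{lemma: collision detection}, then for $\edgecount\le\vertexcount$ compute $\E[\collisioncount]=\binom{\resamplesize}{2}/\edgecount$, show pairwise covariances vanish so $\Var(\collisioncount)\le\E[\collisioncount]$, apply Chebyshev, and invert; for $\edgecount\ge 8\vertexcount$ invoke Theorem~\ref{theorem: non_adaptive when m>cn}. You also correctly flag (as does the paper, via its ``boost by running $O(\log n)$ times in parallel'' remark) that Chebyshev alone gives only $1-O(1/\log^2\vertexcount)$ and that a median-of-$\Theta(\log\vertexcount)$ amplification is needed to reach the stated high-probability guarantee.
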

\begin{proof}
    When \( \edgecount \leq \vertexcount \), from Lemma~\ref{lemma: collision detection} with high probability, a collision is observed.
    
    Let $\sbrac{e_1, e_2, \ldots, e_{\resamplesize}}$ be the sampled edges and $X_{ij}$ be the indicator random variable which is $1$ if the $e_i$ is the same as $e_j$. Since, $\Pr[X_{ij}=1] = \frac{1}{\edgecount}$, we have $\E[X_{ij}] = \frac{1}{\edgecount}$.  
Thus, the number of collisions is defined by
\begin{align*}
    X = \sum_{1 \le i < j \le \resamplesize} X_{ij},
\end{align*}
Using the linearity of expectation, we get
\begin{align*}
\E[X] = \binom{\resamplesize}{2}\cdot \frac{1}{\edgecount}
%= O\left(\frac{\resamplesize^{2}}{\edgecount}\right).
\end{align*}
and we have:
\begin{align*}
    \Var(X) &= \sum_{i < j } \Var(X_{ij}) + 2\sum_{i < j, k < l} \mathrm{Cov}(X_{ij},X_{kl})
\end{align*}
As, $X_{ij}$ is a $0,1$-random variable, we also have $\Var(X_{ij}) \leq \E[X_{ij}^2] = \frac{1}{\edgecount}$. We get
\begin{align*}
    \sum_{i < j } \Var(X_{ij}) \leq \binom{\resamplesize}{2} \cdot \frac{1}{\edgecount} =  \E[X]
\end{align*}
Now, notice that if $X_{ij}$ and $X_{k\ell}$ are disjoint pairs (no indices overlap), then they are independent. If $X_{ij}$ and $X_{k\ell}$ are overlapping pairs (share exactly one index, e.g. $j = k$), then
\begin{align*}
    \mathrm{Cov}(X_{ij},X_{j\ell}) &= \E[X_{ij}X_{j\ell}] - \E[X_{ij}]\E[X_{j\ell}]\\  
    &= \Pr[X_{ij} = X_{j\ell} = 1] - \frac{1}{\edgecount^2}\\
    &= \Pr[e_i = e_j = e_\ell] - \frac{1}{\edgecount^2} \\
    &= 0
\end{align*}
Therefore,
\begin{align*}
    \Var(X)  = \sum_{i < j } \Var(X_{ij}) \leq \E[X]
\end{align*}
Now, applying Chebyshev’s inequality (see~\cite[Theorem~3.6]{Mitzenmacher_Upfal_2005}), we get
\begin{align*}
    \Pr[\abs{X - \E[X]} > \frac{\approxerror}{3} \E[X]] &\leq \frac{9\Var(X)}{\approxerror^2 (\E[X])^2}\\
    &\leq \frac{9}{\approxerror^2\E[X]} &\text{$\fbrac{\Var(X) \leq \E[X]}$}\\
    &= \frac{18\edgecount}{\approxerror^2\resamplesize(\resamplesize - 1)} &\text{$\fbrac{\E[X] = \binom{\resamplesize}{2}\cdot \frac{1}{\edgecount}}$}\\
    &\leq \frac{1}{ \log^2 \vertexcount } &\text{($\edgecount \leq \vertexcount$ and $\resamplesize = \Theta(\sqrt{ \vertexcount }\log \vertexcount /{\approxerror} )$)}
\end{align*}
We can boost the probability further by running the algorithm $O(\log \vertexcount)$ times in parallel. Now, with high probability, we have 
\begin{align}
    \fbrac{1 - \frac{\approxerror}{3}}\E\tbrac{X} \leq X \leq \fbrac{1 + \frac{\approxerror}{3}} \E\tbrac{X}
\end{align}
which implies
\begin{align}
    \fbrac{1 - \frac{\approxerror}{3}} \binom{\resamplesize}{2} \cdot \frac{1}{\edgecount} \leq X \leq \fbrac{1 + \frac{\approxerror}{3}} \binom{\resamplesize}{2} \cdot \frac{1}{\edgecount}
\end{align}
and
\begin{align}
     \frac{\edgecount}{\fbrac{1 + \frac{\approxerror}{3}}}\leq \binom{\resamplesize}{2} \cdot \frac{1}{X} \leq \frac{\edgecount}{\fbrac{1 - \frac{\approxerror}{3}}}   
\end{align}
Using the fact that for any $\approxerror \in (0, 1)$
    \begin{equation*}
        1 - \approxerror \leq \frac{1}{\fbrac{1 + \frac{\approxerror}{3}}} \;\text{ and }\; \frac{1}{\fbrac{1 - \frac{\approxerror}{3}}} \leq 1 + \approxerror
    \end{equation*}
Recall $X$ is the number of collisions. Therefore, we have
    \begin{align}
        \fbrac{1 - \approxerror}\cdot \edgecount \leq \binom{\resamplesize}{2} \cdot \frac{1}{\collisioncount} \leq \edgecount \cdot \fbrac{1 + \approxerror}
    \end{align}
% Then, using the standard median-of-means argument~\cite{chakrabarti2015data}, 
The proof of the first part is completed.

For the second part, when $\edgecount \geq 8 \vertexcount$, from Lemma~\ref{lemma: collision detection} with high probability, no collision is observed, and we proceed with the output of \nonadapalgo{}, which completes the proof of the second part.
\end{proof}

\begin{lemma}\label{lem:correctness}
\fullnonadapalgo{} returns $(1\pm \approxerror)$ to the number of edges $m$ with high probability.   
\end{lemma}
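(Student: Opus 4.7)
The plan is to case-split on the true (unknown) value of $\edgecount$ relative to $\vertexcount$, and apply the previously established lemmas within each case. Concretely, I would distinguish three regimes: (i) $\edgecount \le \vertexcount$, (ii) $\vertexcount < \edgecount < 8\vertexcount$, and (iii) $\edgecount \ge 8\vertexcount$. For each regime I want to argue that, with high probability, the branch of \fullnonadapalgo{} that fires in Step~4 produces a $(1 \pm \approxerror)$-approximation, and then take a union bound over all the high-probability events used.

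In regime (i), Lemma~\ref{lemma: collision detection} gives that \iscollision{} outputs $1$ with high probability, so the output is governed by the collision branch. I would also note that the number of collisions $\collisioncount$ in Step~\ref{alg2: Line 2} is positive with high probability: since $\E[\collisioncount] = \binom{\resamplesize}{2}/\edgecount \ge \binom{\resamplesize}{2}/\vertexcount = \widetilde\Omega(1/\approxerror^2)$, a one-sided Chebyshev bound (using the variance estimate already computed in Lemma~\ref{lemma: complete-non-adap}) gives $\collisioncount > 0$ with high probability. Part~1 of Lemma~\ref{lemma: complete-non-adap} then gives $\collisionedge = (1 \pm \approxerror)\edgecount$. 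Regime (iii) is symmetric and immediate: Lemma~\ref{lemma: collision detection} gives \iscollision{}${}=0$ with high probability, so the output is $\noncollisionedge$, and Part~2 of Lemma~\ref{lemma: complete-non-adap} (equivalently Theorem~\ref{theorem: non_adaptive when m>cn}, which applies since $8\vertexcount \ge \vertexcount/2$) certifies the approximation.

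The only subtle case is regime (ii), where the output of \iscollision{} is not pinned down by Lemma~\ref{lemma: collision detection}. My plan is to argue that \emph{both} branches are correct in this regime, so whichever one is selected the final estimate is good. For the $\noncollisionedge$ branch this is direct: $\edgecount > \vertexcount \ge \vertexcount/2$, so Theorem~\ref{theorem: non_adaptive when m>cn} applies and $\noncollisionedge = (1 \pm \approxerror)\edgecount$ with high probability. For the $\collisionedge$ branch, I would revisit the Chebyshev calculation in the proof of Lemma~\ref{lemma: complete-non-adap}: the failure probability was bounded by $\frac{18\edgecount}{\approxerror^2\,\resamplesize(\resamplesize-1)}$, and with $\resamplesize = \Theta\bigl(\sqrt{\vertexcount}\log\vertexcount/\approxerror\bigr)$ this remains $O(1/\log^2\vertexcount)$ as long as $\edgecount = O(\vertexcount\,\mathrm{polylog}\,\vertexcount)$; the bound $\edgecount < 8\vertexcount$ is therefore well within the range where the estimator $\binom{\resamplesize}{2}/\collisioncount$ still concentrates, so the same argument delivers $\collisionedge = (1 \pm \approxerror)\edgecount$.

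The main obstacle is purely bookkeeping: making sure the intermediate regime is handled and that all the high-probability events (correctness of \iscollision{}, positivity and concentration of $\collisioncount$, correctness of \nonadapalgo{}) can be intersected by a union bound while keeping the overall failure probability at most $\vertexcount^{-c}$. Since each event individually fails with probability at most an inverse polynomial in $\vertexcount$ and there are $O(1)$ of them, this is immediate. The non-adaptivity of the overall procedure is already guaranteed by the fact, noted after Algorithm~\ref{alg: fully-non-adap}, that every query is issued up front; only the \emph{post-processing} that selects between $\collisionedge$ and $\noncollisionedge$ depends on the query answers, which does not affect non-adaptivity.
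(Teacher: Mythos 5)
Your proposal is correct and follows essentially the same structure as the paper's own proof: a three-way case split on $m$ against $n$ and $8n$, with Lemma~\ref{lemma: collision detection} pinning down the branch in the outer regimes, Lemma~\ref{lemma: complete-non-adap} (resp.\ Theorem~\ref{theorem: non_adaptive when m>cn}) certifying correctness there, and the observation that \emph{both} estimators concentrate in the middle regime $n < m < 8n$ so the output is correct regardless of which branch fires. You are slightly more careful than the paper in two places worth keeping: you explicitly note that $\collisioncount > 0$ with high probability (so the collision branch actually fires when it should) rather than leaving this implicit in the concentration bound, and you spell out the final union bound over $O(1)$ events.
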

\begin{proof}
By Lemma~\ref{lemma: complete-non-adap} we already have the following guarantees:
\begin{itemize}
  \item If $m \le n$ then \fullnonadapalgo{} (the collision branch) returns a $(1\pm\varepsilon)$-approximation to $m$.
  \item If $m \ge 8n$ then \fullnonadapalgo{} (the non-collision branch) returns a $(1\pm\varepsilon)$-approximation to $m$.
\end{itemize}

If $\vertexcount < \edgecount < 8\edgecount$, then both the collision estimator and the heavy-regime estimator succeed with high probability when $ \edgecount $ is a constant-factor multiple of $ \vertexcount $. When $ \edgecount < 8 \vertexcount $, then using the same approach as in Lemma~\ref{lemma: complete-non-adap} we can show that the collision-based estimator $ \collisionedge $ is a $(1\pm \approxerror)$-approximation with high probability. When $ \edgecount > \vertexcount $, then the output of $ \nonadapalgo{} $ is $(1\pm \approxerror)$-approximation. Since, $ \nonadapalgo{} $ works in the regime $ \edgecount \geq \vertexcount/2 $.
% then \fullnonadapalgo{} can take either of the branches based on the number of collisions. Let the number of collisions be $\collisioncount_a$. 
%
% We can decide the branch by running $\fullnonadapalgo{}$ with twice the number of random edge samples in parallel for the case $\edgecount \leq 8\vertexcount$ and $\edgecount \geq 16\vertexcount$. Let the number of collisions be $\collisioncount_b$.
%
% If both $\collisioncount_a$ and $\collisioncount_b$ are non-zero then we proceed with the collision branch otherwise, we proceed with the non-collision branch.
\end{proof}

% \begin{theorem}
% There is a randomized algorithm \fullnonadapalgo{} that has access to $\degreeq$ and $\randedgeq$ queries to a graph $\graph = (\vertexset, \edgeset)$ with $\size{\vertexset} = \vertexcount$, $\size{\edgeset} = \edgecount$,  \fullnonadapalgo{} outputs an estimate $\widetilde{\edgecount}$ of $\edgecount$ satisfying $(1-\approxerror)\edgecount \leq \widetilde{\edgecount} \leq (1 + \approxerror)\edgecount$ using $\widetilde{O}\fbrac{ \frac{\sqrt{\vertexcount}}{\approxerror^{2.5}} }$ queries.    
% \end{theorem}
\upperbound*
\begin{proof}
     From Lemma~\ref{lem:correctness}, \fullnonadapalgo{} outputs an estimate $\widetilde{\edgecount}$ of $\edgecount$ satisfying $\widetilde{\edgecount} = (1 \pm \approxerror)\edgecount$ with high probability.
     Also, from Theorem~\ref{theorem: non_adaptive when m>cn}, the number of queries made by \nonadapalgo{} is $\widetilde{O}\fbrac{\frac{\sqrt{\vertexcount}}{\approxerror^{2.5}}}$. Moreover,  \fullnonadapalgo{} makes $ \widetilde{O}\fbrac{\frac{\sqrt{\vertexcount}}{\approxerror} }$ \randedgeq{} queries in Line~\ref{alg2: Line 2}. Hence, the total query complexity of \fullnonadapalgo{}  is $\widetilde{O}(\sqrt{n})$.
\end{proof}

\section{Lower bound}
\label{sec:lowerbound}

We now establish the optimality of our algorithm by proving the following theorem.

\lowerbound*

The lower bound above holds in a stronger query model than that used in our upper bound. The argument combines a hitting bound for local queries with a reduction to support-size estimation for the $\randedgeq$ queries. We begin with the following folklore lower bound for uniformity testing.

\begin{lemma}
\label{lemma:uniform-support-size-estimation}
Let $\mathcal{U}$ be a set of size $n$, and let $\pi:[n]\to \mathcal{U}$ be a uniformly random bijection. Any randomized algorithm that obtains fewer than $\Omega(\sqrt{n})$ samples from a distribution $\mathcal{D}$ over $\mathcal{U}$ cannot distinguish, with success probability at least $2/3$, between the case where $\mathcal{D}$ is uniform over $\mathcal{U}$ and the case where $\mathcal{D}$ is uniform over the subset $\{\pi(1),\pi(2),\dots,\pi(\lfloor n/2\rfloor-1)\}$.
\end{lemma}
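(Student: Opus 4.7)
The plan is a standard birthday-paradox coupling argument. Let $k$ be the number of samples and write $S_1 = (X_1,\dots,X_k)$ for the sample sequence in the uniform case (Case 1) and $S_2 = (Y_1,\dots,Y_k)$ for the sample sequence in the half-support case (Case 2). Since the algorithm is randomized, by Yao's principle it suffices to upper bound the total-variation distance between the distributions of $S_1$ and $S_2$; if this distance is $o(1)$, then by the data-processing inequality no (possibly randomized) function of the samples can output the correct answer with probability at least $2/3$ on both instances.

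First I would estimate the collision probabilities. In Case 1, each $X_i$ is independently uniform over $\mathcal{U}$, so by a union bound the probability that some pair $X_i = X_j$ is at most $\binom{k}{2}/n = O(k^2/n)$. In Case 2, conditional on the random half $H = \{\pi(1),\dots,\pi(\lfloor n/2\rfloor-1)\}$ the samples $Y_i$ are i.i.d.\ uniform on $H$, so a pairwise union bound gives collision probability at most $\binom{k}{2}/|H| = O(k^2/n)$ as well. Let $\mathcal{E}$ denote the event ``no collision among the $k$ samples''; by the above, $\Pr[\mathcal{E}^c]$ is $O(k^2/n)$ under both distributions.

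Next, the key step: I would show that conditional on $\mathcal{E}$, the distributions of $S_1$ and $S_2$ are \emph{identical}. Conditional on $\mathcal{E}$, $S_1$ is uniformly distributed over the set of $k$-tuples of distinct elements of $\mathcal{U}$. For $S_2$, condition on $H$: then $S_2\mid\mathcal{E},H$ is uniform over $k$-tuples of distinct elements of $H$. Now average over $H$: because $\pi$ is a uniformly random bijection, $H$ is a uniformly random subset of $\mathcal{U}$ of size $\lfloor n/2\rfloor-1$, so for any fixed $k$-tuple $(u_1,\dots,u_k)$ of distinct elements of $\mathcal{U}$, the probability (over $\pi$ and the sampling) that $S_2 = (u_1,\dots,u_k)$ conditional on $\mathcal{E}$ depends only on $k$ and $n$, not on the specific tuple. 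Hence $S_2\mid\mathcal{E}$ is also uniform on $k$-tuples of distinct elements of $\mathcal{U}$, matching $S_1\mid\mathcal{E}$.

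Finally, combining these two steps with the standard inequality
\[
\mathrm{d}_{\mathrm{TV}}(S_1,S_2) \le \Pr_{S_1}[\mathcal{E}^c] + \Pr_{S_2}[\mathcal{E}^c] = O(k^2/n),
\]
so whenever $k = o(\sqrt{n})$ the TV distance is $o(1)$ and the two cases cannot be told apart with advantage $1/3$. The main subtlety I anticipate is the conditional-equidistribution step: one has to be careful to marginalize over $\pi$ in the right order, since for a fixed $H$ the distribution of $S_2$ is obviously not uniform over $\mathcal{U}$, and only becomes so after taking expectation over the uniformly random bijection.
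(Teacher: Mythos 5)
Your proof is correct and takes essentially the same route the paper does: the paper's proof is a one-line pointer to the birthday paradox (citing Mitzenmacher--Upfal and Canonne's survey) with exactly the intuition you formalize, namely that with $o(\sqrt n)$ samples no collision occurs with probability $1-o(1)$, and absent collisions the two sample distributions coincide after averaging over the random bijection $\pi$. Your write-up fills in the details the paper omits, and you correctly flag the one subtle step --- that equidistribution of $S_2\mid\mathcal{E}$ over distinct $k$-tuples of $\mathcal{U}$ holds only \emph{after} marginalizing over $\pi$, which is why the lemma is stated with a uniformly random bijection rather than a fixed half-support.
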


This bound follows from the Birthday Paradox; see~\cite[Section~5.1]{Mitzenmacher_Upfal_2005} and~\cite[Page~13]{Canonne2020Survey}. With only $o(\sqrt{n})$ samples, the probability of observing a collision is $o(1)$, so the sample sets are statistically identical in both cases.

We reuse the notation from Lemma~\ref{lemma:uniform-support-size-estimation}. Let $H$ be any graph on $2\lceil \sqrt{n} \rceil + 1$ vertices with at least $n$ edges, and let $f:\mathcal{U}\to E(H)$ be an injective mapping.

We construct a graph $G$ on $n$ vertices as follows. Choose a random subset $S$ of $2\lceil\sqrt{n}\rceil+1$ vertices uniformly at random and identify it with the vertex set of $H$. All vertices outside $S$ are isolated in $G$. For each $x$ in the support of $\mathcal{D}$, include the edge $f(x)$ in $G$, and add no other edges. Thus $|E(G)|$ equals the support size of $\mathcal{D}$.

Consider any non-adaptive algorithm issuing $o(\sqrt{n})$ queries of type $\degreeq$, $\neighbourq$, or $\edgeexistsq$. Since $|S|=2\lceil\sqrt{n}\rceil+1$, each fixed vertex or edge query hits $S$ with probability $O(1/\sqrt{n})$. As the queries are fixed in advance, a union bound shows that, with probability $1-o(1)$, none of them touches $S$. Consequently, all local queries see only isolated vertices and reveal no information about $E(G)$.

Each $\randedgeq$ query, however, returns a uniformly random edge of $G$. Since
\[
E(G)=\{f(x): x\in\operatorname{supp}(\mathcal{D})\},
\]
the distribution of the sampled edge, composed with $f^{-1}$, is exactly $\mathcal{D}$. Thus, distinguishing whether $|E(G)|=n$ or $|E(G)|=\lfloor n/2\rfloor-1$ is equivalent to distinguishing the two cases of Lemma~\ref{lemma:uniform-support-size-estimation}.

By that lemma, any algorithm that succeeds with probability at least $2/3$ must obtain $\Omega(\sqrt{n})$ samples from $\mathcal{D}$, and hence must issue $\Omega(\sqrt{n})$ $\randedgeq$ queries. This completes the proof of Theorem~\ref{theorem:non-adaptive_lower_bound}.

\section*{Acknowledgements}

Arijit Bishnu acknowledges partial support from the Department of Science and Technology (DST), Government of India, through grant TPN-104427. Arijit Ghosh acknowledges partial support from the Science and Engineering Research Board (SERB), Government of India, through the MATRICS grant MTR/2023/001527, and from the Department of Science and Technology (DST), Government of India, through grant TPN-104427.

\bibliographystyle{alpha}

\bibliography{refs}

\end{document}